\titleformat{\subsubsection}[runin]{\normalfont\bfseries}{\thesubsubsection}{1em}{}
\newtheorem{theorem}{Theorem}[section]
\newtheorem{definition}{Definition}
\newtheorem{lemma}{Lemma}
\newtheorem{remark}{Remark}
\newtheorem{example}{Example}
\newtheorem{proposition}[theorem]{Proposition}  
\renewcommand{\qed}{\hfill $\blacksquare$}
\newcounter{para}[section]
\newcommand{\oprocendsymbol}{\hbox{$\bullet$}}
\newcommand{\oprocend}{\relax\ifmmode\else\unskip\hfill\fi\oprocendsymbol}
\newcommand{\g}{\mathfrak{g}}
\newcommand{\BP}{P}
\newcommand{\BH}{H}
\newcommand{\cS}{\mathcal{S}}
\newcommand{\R}{\mathbb{R}}
\renewcommand{\sl}{\mathfrak{sl}}
\newcommand{\so}{\mathfrak{so}}
\newcommand{\C}{\mathbb{C}}
\newcommand*\Laplace{\mathop{}\!\mathbin\bigtriangleup}
\definecolor{BBlue}{cmyk}{.98,0.10,0,.25}
\begin{document}

\title{Ensemble Observability of Bloch Equations \\ with Unknown Population Density}
\date{}
\maketitle

\vspace{-2cm}
\begin{flushright}
{\small {Xudong Chen\footnote[1]{X. Chen is with the ECEE Dept., CU Boulder. Email: \texttt{xudong.chen@colorado.edu}.} 
}}
\end{flushright}

\begin{abstract}
We introduce in the paper a novel observability problem for a large population (in the limit, a continuum ensemble) of nonholonomic control systems with unknown population density. We address the problem by focussing on a prototype of such ensemble system, namely, the ensemble of Bloch equations which is known for its use of describing the evolution of the bulk magnetization of a collective of non-interacting nuclear spins in a static field modulated by a radio frequency ({\em rf}) field. The dynamics of the equations are structurally identical, but show variations in Larmor dispersion and {\em rf} inhomogeneity. We assume that the initial state of any individual system (i.e., individual Bloch equation) is unknown and, moreover, the population density of these individual systems is also unknown. Furthermore, we assume that at any time, there is only one scalar measurement output at our disposal. The measurement output integrates a certain observation function, common to all individual systems, over the continuum ensemble. The observability problem we pose in the paper is thus the following:  Whether one is able to use the common control input (i.e., the {\em rf} field) and the single measurement output to estimate both the initial states of the individual systems and the population density?  Amongst other things, we establish a sufficient condition for the ensemble system to be observable:  We show that if the common observation function is any harmonic homogeneous polynomial of positive degree, then the ensemble system is observable. The main focus of the paper is to demonstrate how to leverage tools from representation theory of Lie algebras to tackle the observability problem. Although the results we establish in the paper are for the specific ensemble of Bloch equations, the approach we develop along the analysis can be generalized to investigate observability of other general ensembles of nonholonomic control systems with a single,  integrated measurement output.     
\end{abstract}

\noindent{\bf Key words:} Ensemble observability, Ensemble system identification,   Representation theory, Spherical Harmonics 

\section{Introduction  and Main result}\label{sec:introduction}
We consider in the paper a large population (in the limit, a continuum) of independent control systems---these individual systems are structurally identical, but show variations in system parameters. We call such a population of control systems an {\em ensemble system}. A precise description of the system model will be given shortly. Control of an ensemble system is about broadcasting a finite-dimensional control input to simultaneously steer all the individual systems in the continuum ensemble. Questions such as whether an ensemble system is controllable and how to generate a control input to steer the entire population of systems have all been investigated to some extent in the literature. For control of linear ensembles (i.e., ensembles of linear systems), we refer the reader to~\cite{li2015ensemble,li2011ensemble,helmke2014uniform,chen2020controllability} and~\cite[Ch.~12]{fuhrmann2015mathematics}.
For control of nonlinear ensembles,  
we first refer the reader to the work~\cite{li2006control,li2009ensemble} by Li and Khaneja. The authors established controllability of a continuum ensemble of Bloch equations~\cite{bloch1946nuclear} using a Lie algebraic method. A similar controllability problem has also been addressed in~\cite{beauchard2010controllability}. But, the authors there have used a different approach that leverages tools from functional analysis. Continuum ensembles of bilinear systems for formation control has been investigated in~\cite{chen2019controllability}.  
We next refer the reader to~\cite{agrachev2016ensemble}  in which  the Rachevsky-Chow theorem (also known as the Lie algebraic rank condition) has been generalized so that it can be used a sufficient condition to check whether a continuum ensemble of control-affine systems is controllable. 
We have recently proposed in~\cite{chen2019structure} a novel class of ensembles of control-affine systems, termed distinguished ensembles, and shown that any such ensemble system satisfies the generalized version of the Rachevsky-Chow theorem and, hence, is ensemble controllable.    

 We address in the paper the counterpart of the ensemble control problem, namely the ensemble estimation problem. Roughly speaking, estimation of an ensemble system is about  
 using a single, integrated measurement output (of finite-dimension) to estimate the initial state of every individual system in the ensemble.     
Note that in its basic setup, the ensemble estimation problem is addressed under the assumption that the entire knowledge of the system model is available (See, for example,~\cite{chen2019structure}).  
We consider in the paper a more challenging but realistic scenario: {\em We assume that the underlying population density of the individual systems in the (continuum) ensemble is unknown}. 

The observability problem we will address in the paper is thus the problem about feasibility of estimating both the initial states and the population density of the individual systems in the ensemble. 
Note, in particular, that the problem can be viewed as a combination of two interrelated subproblems: One is the ``usual ensemble observability problem'' in which one has the complete knowledge of the ensemble model and aims to estimate the initial states of its individual systems. The other one can be related to the problem of ``system identification'' for which one treats the population density as an intrinsic parameter of an ensemble system.  

To the best of author's knowledge, the ensemble observability problem we posed here has not yet been addressed in the literature.  
One of the main contributions of the paper is thus to develop methods for tackling such a problem. Our methods rely on the use of representation theory of Lie algebras.  To demonstrate such a connection between the observability problem and the tools from the representation theory, we focus in the paper on a prototype of an ensemble of nonholonomic control systems, namely, a continuum ensemble of Bloch equations (the mathematical model will be given shortly). 
 Although the results established in the paper are for the specific ensemble of Bloch equations,  
 the methods we develop along the analysis can be extended to address other generals cases. We will address such an extension toward the end of the paper.

\subsection{System model: Ensemble of Bloch equations}
Bloch equation~\cite{bloch1946nuclear} is known for its use of describing the evolution of the bulk magnetization of a collective of non-interacting nuclear spins in a static field modulated by a controlled radio frequency ({\em rf}) field. 
When factors such as Larmor dispersion and {\em rf} inhomogeneity matter, a continuum ensemble of Bloch equations is often used to model variations in these system parameters.    
To this end, we let $S^2$ be the unit sphere embedded in $\R^3$. For a point $x\in S^2$, we let  $x = (x_1,x_2,x_3)$ be its coordinates. Next, we define three vector fields on $S^2$ as follows:
  \begin{equation}\label{eq:vectorfields}
  f_0(x):= 
  \begin{bmatrix}
  	x_2 \\
  	-x_1 \\
  	0
  \end{bmatrix}, \quad
  f_1(x):= 
  \begin{bmatrix}
  	x_3 \\
  	0 \\
  	-x_1 
  \end{bmatrix}, \quad
  f_2(x):=
  \begin{bmatrix}
  	0 \\
  	x_3 \\
  	-x_2
  \end{bmatrix}.
  \end{equation}  
  Then, the dynamics of an ensemble of Bloch equations, parametrized by a pair of scalar parameters $(\sigma_1,\sigma_2)$, are described by the following differential equations: 
  \begin{equation}\label{eq:justdynamics}
  \dot x_\sigma(t) = \sigma_1 f_0(x_\sigma(t)) + \sigma_2\sum^2_{i = 1}u_{i}(t) f_i(x_\sigma(t)),  
  \end{equation}
  where $u_1(t)$, $u_2(t)$ are scalar control inputs and the two parameters $\sigma_1$, $\sigma_2$ are used to model Larmor dispersion and {\em rf} inhomogeneity, respectively.  
 We assume in the paper that $\sigma_1\in [a_1,b_1]$ with $a_1 < b_1$ and $\sigma_2 \in [a_2, b_2]$ with $0< a_2 < b_2$.  
  We let $\sigma := (\sigma_1,\sigma_2)$ and $$\Sigma: = [a_1,b_1]\times [a_2, b_2].$$ 
  We call $\Sigma$ the {\em parameterization space}. 
  
  If an individual Bloch equation is associated with the parameter~$\sigma$, we call it {\em system-$\sigma$}. 
 Note that by~\eqref{eq:justdynamics}, each system-$\sigma$ is control-affine.  We call $f_0$ a {\em drifting vector field} and $f_1$, $f_2$ {\em control vector fields}.  
We note here that the same model~\eqref{eq:justdynamics} has been used in~\cite{li2006control,li2009ensemble,beauchard2010controllability} for the study of ensemble controllability problem.

For ease of notation, we let $u(t):= (u_1(t), u_2(t))$. Further, we let $x_\Sigma(t)$ be the collection of current states $x_\sigma(t)$ of all individual systems in the ensemble: 
$$x_\Sigma(t):= \{x_\sigma(t) \mid \sigma \in \Sigma\}.$$  
We call $x_\Sigma(t)$ a {\em profile}. Note that each profile $x_\Sigma(t)$ can be thought as a function from $\Sigma$ to~$S^2$. Let ${\rm C}^0(\Sigma, S^2)$ be the set of continuous functions from $\Sigma$ to $S^2$. We assume in the paper that each profile $x_\Sigma(t)$ belongs to ${\rm C}^0(\Sigma, S^2)$. 

  Next, we let $\mu$ be a positive Borel measure defined on the parameterization space~$\Sigma$. The measure $\mu$ will be used to describe the population density of the individual systems. Specifically, we assume that for any given measurable subset~$\Sigma'$ of~$\Sigma$, the total amount of individual systems, with their indices~$\sigma$ belonging to~$\Sigma'$, is proportional to $\int_{\Sigma'}d\mu$. For ease of analysis, we assume that there is a continuous function $\rho$ on~$\Sigma$ such that $\rho(\sigma)\ge 0$ for all $\sigma$ and  $d\mu = \rho(\sigma)d\sigma$. We call $\rho$ the {\em density function}.

 With the measure~$\mu$ defined above, we now introduce the estimation model  as a counterpart of~\eqref{eq:justdynamics}. Following the problem formulation in~\cite{chen2019structure}, we assume that there is only {\em one scalar} measurement output, denoted by $y(t)$, at our disposal. The measurement output integrates a certain observation function $\phi$ (common to all individual systems) over the entire parameterization space~$\Sigma$. Specifically, we have that
$$
		y(t) := \displaystyle\int_\Sigma \phi(x_\sigma(t)) d\mu,  
$$ 
where the observation function $\phi:S^2\to \R$ is assumed to be continuous. Consider, for example, the map $\phi:x\mapsto x_i$  for some $i = 1,2,3$. Then, $y(t)$ can be interpreted as the projection of the bulk magnetization vector to the $x_i$-axis. We consider in the paper general observation functions that can render the ensemble system observable. A precise problem formulation will be given soon.

By combining the control model~\eqref{eq:justdynamics} and the above estimation model, we obtain the following ensemble system:  
 \begin{equation}\label{eq:system}
\left\{
\begin{array}{lll}
	\dot x_\sigma(t) & = & \sigma_1 f_0(x_\sigma(t)) + \sigma_2\sum^2_{i = 1}u_{i}(t) f_i(x_\sigma(t)),  \vspace{3pt}\\
	y(t) & = & \displaystyle\int_\Sigma \phi(x_\sigma(t)) d\mu. 	
\end{array}
\right.
\end{equation}
We assume in the paper that  $x_\sigma(0)$ is unknown for all $\sigma\in \Sigma$ and, moreover, the measure~$\mu$ is also unknown. We note here that system~\eqref{eq:system} can be viewed as a prototype of a general ensemble of nonholonomic control systems with a single integrated measurement output.

\subsection{Problem formulation: Ensemble observability} 
We formulate in the section the ensemble observability problem for system~\eqref{eq:system} with unknown population density. We start with the following definition:    

\begin{definition}\label{def:outputequiv}
Let $x_\Sigma(0)$, $x'_\Sigma(0)$ be initial profiles and $\mu$, $\mu'$ be positive Borel measures on $\Sigma$.  
	Two pairs $(x_\Sigma(0), \mu) $ and $(x'_\Sigma(0), \mu')$ are {\bf output equivalent}, which we denote by $$(x_\Sigma(0), \mu) \sim (x'_\Sigma(0),  \mu'),$$ if for any $T > 0$ and any integrable function $u:[0,T]\to \R^2$ as a control input,  
	$$\int_\Sigma \phi(x_\sigma(t)) d\mu = \int_\Sigma \phi(x'_\sigma(t)) d\mu', \quad \forall t\in [0,T].$$ 	   
\end{definition}

Following the above definition, we introduce for each pair $(x_\Sigma(0), \mu)$, the collection of its output equivalent pairs as follows:
$$
O(x_\Sigma(0), \mu):= \{ (x'_\Sigma(0), \mu')  \mid  (x'_\Sigma(0), \mu') \sim (x_\Sigma(0), \mu) \}.
$$
Note that $(x_\Sigma(0), \mu)$ always belongs to $O(x_\Sigma(0), \mu)$. We next have the following definition: 

\begin{definition}\label{def:ensembleobservability}
System~\eqref{eq:system} is {\bf weakly ensemble observable} if for any given  $(x_\Sigma(0), \mu)$, the set $O(x_\Sigma(0), \mu)$ is finite. Moreover, we require that if $(x'_\Sigma(0), \mu')$ belongs to $O(x_\Sigma(0), \mu)$ and if $(x'_\Sigma(0), \mu') \neq (x_\Sigma(0), \mu)$, then the following hold:  
\begin{enumerate}
\item[(1)] The two measures $\mu'$ and $\mu$ are identical. 
\item[(2)] For any $\sigma\in \Sigma$, $x'_\sigma(0) \neq x_\sigma(0)$.   
\end{enumerate}
System~\eqref{eq:system} is {\bf ensemble observable} if for any $(x_\Sigma(0), \mu)$, $O(x_\Sigma(0), \mu) = \{(x_\Sigma(0), \mu)\}$. 
\end{definition}

\begin{remark}{\em 
We note that the above definition about (weak) ensemble observability is stronger than the ``usual'' definition of ensemble observability introduced  in~\cite{chen2019structure}. The key difference between the two definitions is that Def.~\ref{def:ensembleobservability} takes into account the fact that one needs to identify the unknown population density  as well. 
We also note that the two items in Def.~\ref{def:ensembleobservability} have the following implication: If system~\eqref{eq:system} is weakly ensemble observable, then by knowing the initial state $x_\sigma(0)$ of a single individual system-$\sigma$, one is able to estimate the entire initial profile $x_\Sigma(0)$ and the measure~$\mu$. }
\end{remark}

The problem we will address in the paper is the following: {\em Given the control dynamics~\eqref{eq:justdynamics}, what kind of observation function will guarantee that the entire system~\eqref{eq:system} is (weakly) ensemble observable?} 
 We provide below a partial solution to the above question by providing a class of observation functions that can fulfill the requirement.

 \begin{figure}[h]
\begin{center}
	\includegraphics[width = 0.6\textwidth]{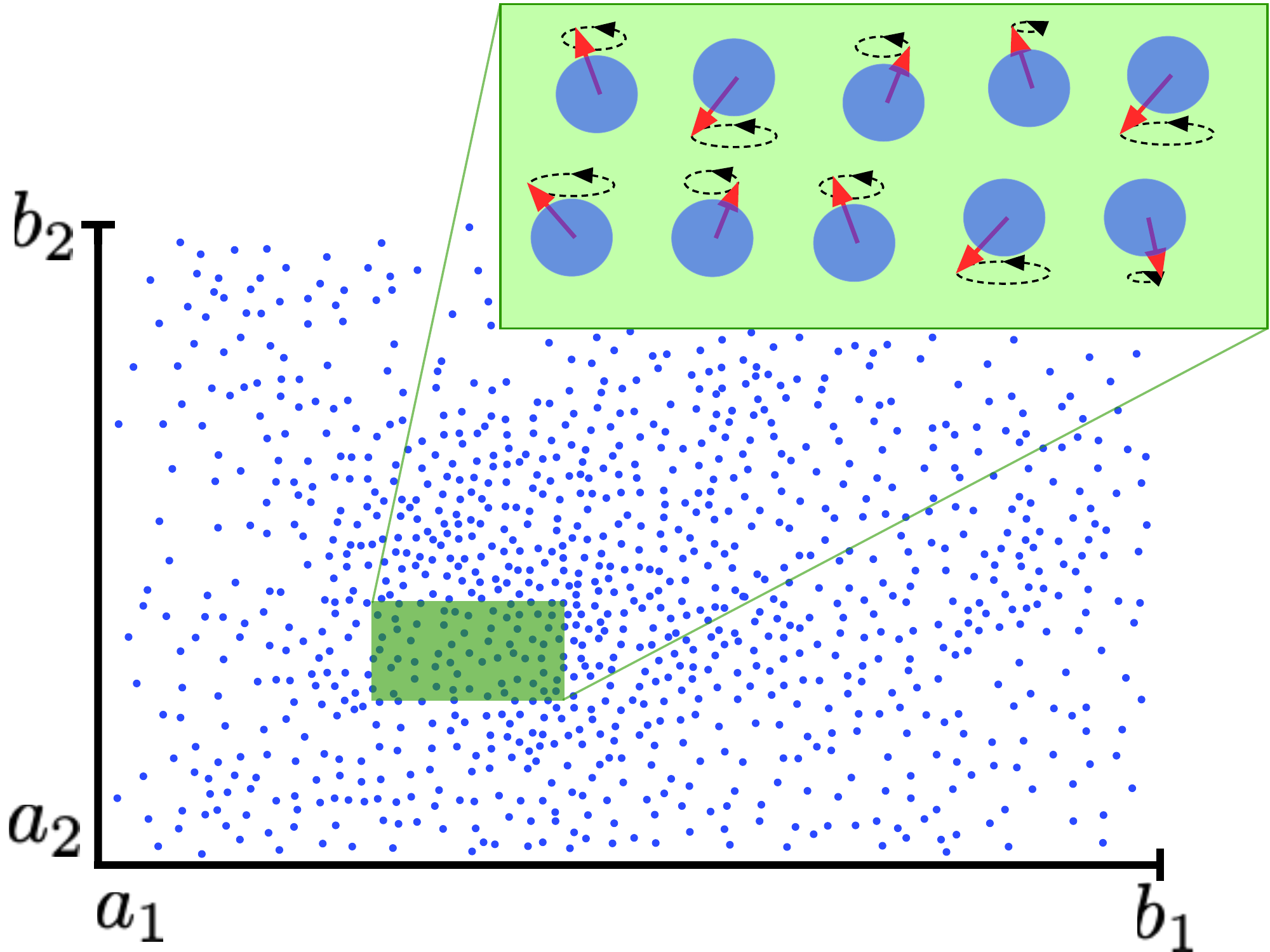}
	\caption{A large population (in the limit, a continuum) of Bloch equations over $\Sigma = [a_1,b_1]\times [a_2,b_2]$. Both initial states and the population density are unknown to the controller.}\label{fig:blocheq} 
	\end{center}
\end{figure}

\subsection{Main result}
We state in the subsection the main result of the paper.  
To proceed, we first introduce a few notations that are necessary to state the result.  
Let~$\BP$ be the space of all homogeneous polynomials in variables $x_1$, $x_2$, and $x_3$. For any nonnegative integer~$n$, we let $\BP_n$ be the space of homogeneous polynomials of degree~$n$. The dimension of $\BP_n$ is given by $\nicefrac{(n+2)(n+1)}{2}$. Note that one can treat a polynomial $p(x)$ as a function on $S^2$ by restricting~$x\in \R^3$ to~$x\in S^2$.  
Denote by $\Laplace$ the Laplace operator on $\R^3$: 
$$\Laplace:= \sum^3_{i = 1}\nicefrac{\partial^2}{\partial x^2_i}.$$ 
We recall the following definition: 

\begin{definition}
	A polynomial~$p$ is {\bf harmonic} if $\Laplace p = 0$. 
\end{definition}

Let $\BH_n$ be the space of harmonic homogeneous polynomials of degree~$n$. The dimension of $\BH_n$ is $2n + 1$. 
For example, for $n = 1$, $H_1$ is spanned by the basis $\{x_1, x_2, x_3\}$; for $n = 2$, $H_2$ is spanned by the basis $\{x^2_1 - x^2_2, x^2_2 - x^2_3,  x_1x_2, x_1x_3, x_2x_3\}$.  

For any real number $r$, we let $\lfloor r \rfloor$ be the largest integer such that $\lfloor r \rfloor \le r$. Then, it is known (see, for example,~\cite[Ch.~17.6]{hall2013quantum}) that the space of $P_n$ can be decomposed as a direct sum as follows: 
$$
\BP_n = \bigoplus^{\lfloor \nicefrac{n}{2} \rfloor}_{k = 0} \|x\|^{2k} \BH_{n-2k},
$$  
where $\|x\|^2 := \sum^3_{i = 1} x^2_i$. 

We will now state the main result of the paper:   
 
\begin{theorem}\label{thm:mthm}
 Consider the ensemble system~\eqref{eq:system}. Suppose that the observation function $\phi$ is nonzero and belongs to $H_n$ for some $n \ge 1$; then, the following hold:  
 \begin{enumerate}
 \item[(1)] If $n$ is even, then system~\eqref{eq:system} is weakly ensemble observable. Moreover, for any pair $(x_\Sigma(0), \mu)$, we have that  
 \begin{equation}\label{eq:ambiguity}
 O(x_\Sigma(0), \mu) = \{(x_\Sigma(0), \mu), (-x_\Sigma(0), \mu)\}.
 \end{equation}
 \item[(2)] If $n$ is odd, then system~\eqref{eq:system} is ensemble observable.
 \end{enumerate}
\end{theorem}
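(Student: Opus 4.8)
The plan is to exploit the fact that, under the broadcast control inputs, the reachable outputs are governed by the action of the Lie algebra generated by the vector fields $f_0, f_1, f_2$ on functions on $S^2$. The first step is to identify this Lie algebra. The vector fields in~\eqref{eq:vectorfields} are precisely the infinitesimal generators of rotations about the three coordinate axes, so they span $\so(3)$ acting on $S^2$; more precisely, $\sigma_1 f_0 + \sigma_2(u_1 f_1 + u_2 f_2)$ ranges over all of $\so(3)$ as $u_1, u_2$ vary (using $\sigma_2 \ge a_2 > 0$). Because $\phi \in \BH_n$ and $\BH_n$ is exactly the space of degree-$n$ spherical harmonics, which is an irreducible $\SO(3)$-representation (the dimension $2n+1$ stated in the excerpt is the tell), the crucial representation-theoretic input is that the orbit of $\phi$ under precomposition with rotations spans all of $\BH_n$. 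I would make this precise by differentiating $y(t)$ repeatedly in $t$: the time derivatives of the output are integrals over $\Sigma$ of the successive Lie derivatives $L_{\sigma_1 f_0 + \sigma_2 \sum u_i f_i}^{(k)} \phi$ evaluated along the flow, and by choosing the constant inputs freely one reaches every element $\psi \in \BH_n$ as an observation function, but now weighted pointwise by powers of $\sigma_1$ and $\sigma_2$ coming from the parameters in the dynamics.

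The second step is to convert this into a statement about moments of the unknown measure. Evaluating at $t = 0$ and varying the inputs and the number of differentiations, I expect to recover, for every $\psi \in \BH_n$ and every pair of nonnegative integers $(p,q)$, the quantity $\int_\Sigma \sigma_1^p \sigma_2^q\, \psi(x_\sigma(0))\, d\mu(\sigma)$. Thus the accessible data amount to all such weighted moments. Now because the functions $\sigma_1^p \sigma_2^q$ separate points and are dense (Stone--Weierstrass) in $\mathrm{C}^0(\Sigma)$, knowing these moments for all $(p,q)$ is equivalent to knowing the pushforward measures $\psi(x_\sigma(0))\, d\mu$ on $\Sigma$ for every $\psi \in \BH_n$, i.e.\ knowing the continuous density $\sigma \mapsto \psi(x_\sigma(0))\rho(\sigma)$ for all $\psi \in \BH_n$. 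The observability question therefore reduces to a pointwise, finite-dimensional inverse problem on $S^2$: given the values $\{\psi(x)\}_{\psi \in \BH_n}$ (a point in $\BH_n^*$, equivalently the degree-$n$ spherical-harmonic ``embedding'' of $S^2$), together with a common positive scalar factor $\rho(\sigma)$, recover the point $x \in S^2$ and the scalar.

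The third step is the heart of the matter: analyzing the fibers of the Veronese-type map $V_n : S^2 \to \BH_n^*$, $x \mapsto (\psi \mapsto \psi(x))$, jointly with the unknown scalar $\rho$. For this I would use that $\BH_n$ restricted to $S^2$ contains, for each $n$, enough harmonics to determine $x$ up to the symmetry dictated by parity: the map $V_n$ is injective on $S^2$ exactly when $n$ is odd, and identifies antipodal points $x$ and $-x$ when $n$ is even, since $\psi(-x) = (-1)^n \psi(x)$ for $\psi \in \BH_n$. This is the parity dichotomy that produces the two cases of the theorem. When $n$ is odd, $x_\sigma(0)$ is uniquely determined pointwise and then $\rho$ follows by dividing out, giving full ensemble observability. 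When $n$ is even, at each $\sigma$ one cannot distinguish $x_\sigma(0)$ from $-x_\sigma(0)$ from the harmonic values alone; the remaining task is to show that the only \emph{globally consistent} continuous alternative profiles are $x_\Sigma(0)$ and $-x_\Sigma(0)$ (a sign that must be chosen coherently across $\Sigma$ by continuity, so no mixing of signs is allowed), that $\mu$ is forced to equal $\mu'$, and that these two profiles are genuinely distinct everywhere, matching~\eqref{eq:ambiguity} and the two items of Def.~\ref{def:ensembleobservability}.

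The main obstacle I anticipate is the rigorous passage from the output data to the pointwise moment information, i.e.\ justifying that the weighted moments $\int_\Sigma \sigma_1^p \sigma_2^q \psi(x_\sigma(0))\, d\mu$ are \emph{all} recoverable from $y(\cdot)$ and its derivatives. This requires (a) showing that the $\SO(3)$-orbit of $\phi$ spans the whole irreducible $\BH_n$, so every $\psi$ is reachable as an observation function — an irreducibility argument; and (b) showing that the parameter weights $\sigma_1^p, \sigma_2^q$ are independently accessible, which hinges on the fact that $\sigma_1$ multiplies the drift and $\sigma_2$ the control terms, so differentiating in $t$ versus perturbing the control amplitude produces algebraically independent powers of the two parameters. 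Subtleties about interchanging differentiation with the integral over $\Sigma$, and about the nonconstant flow (the harmonics are carried around $S^2$ by the rotation flow, which is why $\BH_n$ is preserved and why only $\BH_n$, not all of $\BP_n$, appears), will also need care; the $\SO(3)$-invariance of $\BH_n$ under the flow is exactly what keeps the reachable observation functions inside the single irreducible block and makes the representation-theoretic reduction clean.
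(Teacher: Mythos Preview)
Your overall architecture matches the paper's: reduce output equivalence to equality of the functions $\sigma \mapsto \psi(x_\sigma(0))\rho(\sigma)$ for every $\psi \in H_n$, then invert the spherical-harmonic embedding pointwise, with parity producing the odd/even dichotomy. Your third step is essentially correct, and the continuity argument you sketch for ruling out sign-mixing when $n$ is even is exactly what the paper does.

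The genuine gap is in your second step, specifically item~(b). You claim the moments $\int_\Sigma \sigma_1^p \sigma_2^q\,\psi(x_\sigma(0))\,d\mu$ are recoverable for \emph{all} $(p,q)$ and all $\psi \in H_n$, but the Lie-derivative mechanism does not decouple these: applying a word $f_\alpha$ to $\phi$ produces the weight $\sigma^{\kappa(f_\alpha)}$ \emph{together} with the specific polynomial $f_\alpha\phi$, not an arbitrary $\psi$ with an arbitrary exponent pair. Varying the word changes both simultaneously. The paper's resolution (Prop.~\ref{thm:xixi}) is to construct elements $\xi,\zeta\in T(\g)$ that (i) act on $H_n$ as the Casimir element, i.e.\ as the scalar $-n(n+1)$, so that $\xi^N p_i = \zeta^N p_i = \lambda^N p_i$ leaves the polynomial unchanged, yet (ii) carry well-defined, distinct $\kappa$-grades $\kappa(\xi)=(1,2)$ and $\kappa(\zeta)=(0,4)$. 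Iterating $\xi$ and $\zeta$ then generates, for each fixed basis element $p_i=f_{\alpha_i}\phi$, a family of moments whose $\sigma$-weights form a point-separating algebra on $\Sigma$, and Stone--Weierstrass applies. Your hand-wave that ``differentiating in $t$ versus perturbing the control amplitude produces algebraically independent powers'' does not supply this; what is needed is an operator that bumps the $\sigma$-degree while \emph{fixing} the observation polynomial, and that is precisely the Casimir trick.

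A second, smaller gap: in your third step you assert that from the data $\{\rho(\sigma)\psi(x_\sigma(0)):\psi\in H_n\}$ one can recover $\rho(\sigma)$ and $x_\sigma(0)$ separately, with $\rho$ obtained ``by dividing out.'' But you must first extract $\rho$ without knowing $x_\sigma(0)$. The paper does this via Prop.~\ref{prop:qqq} (equivalently, the Addition Theorem for spherical harmonics): there is a quadratic form $q$ in a basis $\{p_i\}$ of $H_n$ with $q(p_1(x),\ldots,p_{2n+1}(x))=\|x\|^{2n}\equiv 1$ on $S^2$, so that $\rho(\sigma)^2$ is read off directly from the known products $\rho(\sigma)p_i(x_\sigma(0))$. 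Your proposal needs this identity, or something equivalent, to get $\mu=\mu'$ before comparing profiles.
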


\begin{remark}{\em 
We note here that if $n$ is even, then $O(x_\Sigma,\mu)$ contains at least the two pairs in~\eqref{eq:ambiguity}.  
We elaborate below on the fact. First, note that if two initial profiles are related by $x'_\Sigma(0) = -x_\Sigma(0)$, then for any control input~$u(t)$, it always holds that $x'_\Sigma(t) = - x_\Sigma(t)$ for all~$t$. Next, note that if $\phi$ is a homogeneous polynomial of even degree, then for any $x\in \R^3$, $\phi(-x) = (-1)^n\phi(x) = \phi(x)$. It then follows that 
$$
\int_\Sigma \phi(x_\sigma(t)) d\mu = \int_\Sigma \phi(-x_\sigma(t)) d\mu,
$$
and, hence, $(-x_\Sigma(0), \mu) \sim (x_\Sigma(0), \mu)$.  But then, item~(1) of Theorem~\ref{thm:mthm} says that there is no other pair $(x'_\Sigma(0), \mu')$ that can be output equivalent to $(x_\Sigma(0), \mu)$.  
}	
\end{remark}

{\em Organization of the paper.} In the remainder of the paper, we develop   methods for addressing the ensemble observability problem and prove Theorem~\ref{thm:mthm}.  We will first introduce in Sec.~\ref{sec:prelim} key definitions and notations that will be frequently used throughout the paper. Because our methods rely on the use of representation theory of $\sl(2,\C)$ on the space of homogeneous polynomials (where $\sl(2,\C)$ is the special linear Lie algebra of all $2\times 2$ complex matrices with zero trace), we present in Sec.~\ref{sec:rep} relevant results about such a representation. Then, in Sec.~\ref{sec:proof}, we demonstrate how the representation theory can be used to addressed the ensemble observability problem. The proof of Theorem~\ref{thm:mthm} will also be established along the analysis. We provide conclusions and further discussions in Sec.~\ref{sec:conclusion}. In particular, we will discuss about connections with our earlier work~\cite{chen2019structure} and extensions of the methods developed in the paper to other general ensembles of nonholonomic control systems.

\section{Definitions and Notations}\label{sec:prelim}
We introduce in the section key definitions and notations that will be frequently used throughout the paper.

\subsection{Differential geometry}
For any two smooth vector fields~$f$ and~$g$ on $S^2$, we let $[f, g]$ be the Lie bracket defined as follows:
$$
[f, g](x) := \frac{\partial g}{\partial x} f(x) - \frac{\partial f}{\partial x} g(x)
$$ 
Note that $[f, g]$ is also a vector field on $S^2$. 
Recall that $f_0$ is the drifting vector field and $f_1$, $f_2$ are control vector fields defined in~\eqref{eq:vectorfields}. 
We let~$\g$ be the $\R$-span of $f_0$, $f_1$, and $f_2$. Then, $\g$ is a (real) Lie algebra with the Lie bracket defined above. Note that if $(i,j,k)$ is a cyclic rotation of $(0,1,2)$, then $$[f_i, f_j] = f_k.$$ The above structural coefficients then imply that $\g$ is isomorphic to $\so(3)$ (or simply $\g\approx \so(3)$), where $\so(3)$ is the Lie algebra of~$3\times 3$ real skew-symmetric matrices. We also note that $\so(3)$ is isomorphic $\mathfrak{su}(2)$, i.e., the special unitary Lie algebra (as a real Lie algebra) comprising all $2\times 2$ skew-Hermitian matrices with zero trace. Thus, $\g\approx \mathfrak{su}(2)$ as well.  

For a given vector field $f\in \g$ and a smooth function~$\phi$ on~$S^2$, we let $f\phi$ be another function on $S^2$ defined as follows: 
$$
(f\phi)(x) = \lim_{\epsilon \to 0} \frac{\phi(x + \epsilon f(x)) - \phi(x)}{\epsilon}, \quad \forall x\in S^2.
$$
Note that $(f\phi)(x)$ is nothing but the directional derivative of~$\phi$ along~$f$ at~$x$.  
 
Let $\mathcal{A}$ be the collection of words over the alphabet $\{0,1,2\}$, i.e., $\mathcal{A}$ comprises all finite sequences $\alpha = i_1i_2\cdots i_k$ where each $i_j$ belongs to $\{0,1,2\}$. The length of a word~$\alpha$ is defined to be the total number of indices~$i_j$ in it. 
Next, for a given word $\alpha = i_1\cdots i_k$ and a smooth function~$\phi$ on~$S^2$, we let $$f_\alpha \phi:= f_{i_1}\cdots f_{i_k} \phi.$$ 
Note that if $\alpha = \varnothing$, then we let $f_\alpha \phi := \phi$.

Let $T(\g)$ be the vector space spanned by $f_\alpha$, i.e., each element $\eta$ in $T(\g)$ is a linear combination of finitely many $f_\alpha$ for $\alpha\in \mathcal{A}$. Note that $T(\g)$ can be identified with the space of tensors of~$\g$. Specifically, each $f_\alpha$ can be viewed as a tensor in $\g\otimes \cdots \otimes \g$, where the number of copies of $\g$ matches the length of~$\alpha$.

\subsection{Lie algebra representation}
For an arbitrary real vector space~$V$, we let $V^\C$ be the {\em complexification} of~$V$, i.e., $V^\C$ is a complex vector space comprising all elements $v + \mathrm{i} w$ where $\mathrm{i}$ is the imaginary unit and $v, w$ belong to~$V$. Recall that  $\g \approx \mathfrak{su}(2)$ and, hence, its complexification is given by~\cite[Ch~3.6]{hall2015lie}
$$\g^\C\approx \sl(2,\C),$$ where $\sl(2,\C)$ is the Lie algebra of $2\times 2$ complex matrices with zero trace. 
 
We also recall that $P_n$ is the space of homogeneous polynomials of degree~$n$ in variables $x_1$, $x_2$, and $x_3$. Note that for any $f_i$, with $i = 0,1,2$, and any $p\in P_n$, $f_ip$ belongs to $P_n$. Thus, $P_n$ is closed under directional derivative along any~$f\in \g$.   
We now define a map    
$\pi: \g\times P_n \to P_n$ as follows:
$$
\pi: (f, p) \mapsto \pi(f)p:= fp.
$$ 
The map $\pi$ is in fact a {\em representation} of~$\g$ on $P_n$, i.e., each $\pi(f)$ for, $f\in \g$,    
is an endomorphism of $P_n$ and satisfies the following relationship:  
\begin{equation}\label{eq:representation}
 \pi([f, g]) =  \pi(f)\pi(g) - \pi(g)\pi(f), \quad \forall f, g\in \g.
\end{equation}
We will use $\pi(f) p$ and $fp$ interchangeably. 

Let $P'_n$ be a subspace of $P_n$. We say that $P'_n$ is invariant under $\pi(\g)$ if for any $f\in \g$ and $p\in P'_n$, we have that $\pi(f)p\in P'_n$. Thus, if we let $$\pi': \g\times P'_n \to P'_n$$ be defined by restricting $\pi$ to $\g\times P'_n$, then $\pi'$ is a representation of $\g$ on $P'_n$. We say that $\pi'$ is {\em irreducible} if there does not exist a nonzero, proper subspace $P''_n$ of $P'_n$ such that $P''_n$ is invariant under $\pi(\g)$. 

We further note that the representation $\pi$ can be naturally extended to $\g^\C\times \BP^\C_n$: For any $f, g\in \g$ and any $p, q\in P_n$, let 
\begin{equation}\label{eq:complexificationpi}
\pi(f + \mathrm{i} g) (p + \mathrm{i}q) := \\
(\pi(f)p - \pi(g)q) + \mathrm{i}(\pi(f)q + \pi(g)p).  
\end{equation}
Then, with such an extension, $\pi$ is a representation of $\g^\C$ on $P^\C_n$. 
We will present a few relevant facts about the representation in Sec.~\ref{sec:rep}.

\subsection{Algebra of functions}
Let $\Phi:=\{\phi_i\}^l_{i = 1}$ be a set of functions on $S^2$. 
Let $n_1,\ldots, n_l$ be nonnegative integers. We call $\phi^{n_1}_1\cdots \phi_l^{n_l}$ a {\em monomial}. The {\em degree} of the monomial is $\sum^l_{i = 1}n_i$. Denote by $\cS(\Phi)$ the algebra of generated by~$\Phi$, i.e., each element in $\cS(\Phi)$ is a linear combination of finitely many monomials. Further, we let $\cS_k(\Phi)$ be the subspace of $\cS(\Phi)$ spanned by all monomials of degree~$k$.

\section{Representation on homogeneous polynomials}\label{sec:rep}
 We present in the section a few relevant results (with Prop.~\ref{thm:xixi} the main result) that will be of great use in establishing Theorem~\ref{thm:mthm}. Some of the results are well known. For completeness of presentation, we provide short proofs in the Appendices.   
 
To proceed, we recall that $\mathcal{A}$ is the collection of words over the alphabet $\{0,1,2\}$ and $T(\g)$ is the vector space spanned by all~$f_\alpha$ for $\alpha\in \mathcal{A}$. Now, for a given word $\alpha\in \mathcal{A}$, 
we let $$\kappa(\alpha):= (\kappa_1(\alpha), \kappa_2(\alpha))\in \mathbb{Z}^2$$
where $\kappa_1(\alpha)$ and $\kappa_2(\alpha)$ are defined as follows: 
$$
\left\{
\begin{array}{l}
\kappa_1(\alpha):= \mbox{number of appearances of } ``0" \mbox{ in } \alpha, \\ 
\kappa_2(\alpha):= \mbox{number of appearances of } ``1" \mbox{ and } ``2" \mbox{ in } \alpha
\end{array}
\right.
$$
For example, if $\alpha = 0121$, then $\kappa(\alpha) = (1,3)$.

Next, with a slight abuse of notation, we let $\kappa(f_\alpha) := \kappa(\alpha)$.  
Further, we consider an element $\xi = \sum^n_{i = 1} c_i f_{\alpha_i}$ in $T(\g)$. Suppose that $\kappa(f_{\alpha_i}) = \kappa(f_{\alpha_j})$ for all $i, j\in \{1,\ldots, n\}$; then, we can define without ambiguity that 
$$\kappa(\xi):= \kappa(f_{\alpha_i}), \quad \mbox{for some } i\in \{1,\ldots, n\}.$$ 
Note that if $\kappa(\xi)$ is defined, then the lengths of all the words $\alpha_i$ that are involved in $\xi$ are identical with each other.

We establish in the section the following result:

\begin{proposition}\label{thm:xixi}
There exist nonzero $\xi$ and $\zeta$ in $T(\g)$ such that the following properties are satisfied: 
\begin{enumerate}
\item[(1)] Both $\kappa(\xi)$ and $\kappa(\zeta)$ are well defined. Moreover, 
$$\kappa_1(\xi) > 0, \quad  \mbox{and} \quad \kappa_1(\zeta) = 0.$$ 
\item[(2)] For any $p \in H_n$ with $n \ge 1$,  
$$
\xi p = \zeta p = \lambda p,  
$$ 	
where $\lambda$ is some nonzero constant.
\end{enumerate}	
\end{proposition}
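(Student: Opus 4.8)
The plan is to reduce the statement to a scalar-eigenvalue computation on each irreducible summand and then to exhibit the two elements explicitly. Since $H_n^\C$ is the irreducible $(2n+1)$-dimensional representation of $\g^\C\cong\sl(2,\C)$, Schur's lemma tells me that an element of $T(\g)$ acts as a scalar on $H_n$ as soon as it is diagonal with a single eigenvalue in a weight basis; thus it suffices to produce $\xi,\zeta\in T(\g)$ whose images are simultaneously diagonalized and constant on each $H_n$, with a common nonzero eigenvalue $\lambda=\lambda(n)$. To organize the computation I would pass to the complexification and use the ladder elements $J_\pm:=f_1\mp \mathrm{i} f_2\in\g^\C$, together with $f_0$. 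Their $\kappa$-types are uniform, $\kappa(J_\pm)=(0,1)$ and $\kappa(f_0)=(1,0)$, and on $H_n$ the operator $f_0$ is diagonal with eigenvalue $\mathrm{i} m$ on the weight-$m$ line ($m=-n,\dots,n$). The two basic weight-zero quadratics are then diagonal with $\pi(J_+J_-)=\alpha_m:=m^2-m+c_n$ and $\pi(J_-J_+)=\beta_m:=m^2+m+c_n$ on the weight-$m$ line, where $c_n=-n(n+1)$ is the Casimir eigenvalue; everything below is assembled from these.

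For $\zeta$ I would stay entirely inside the subalgebra generated by $J_+,J_-$, so that $\kappa_1(\zeta)=0$ holds automatically. The six weight-zero words of length four, $J_+J_+J_-J_-$, $J_-J_-J_+J_+$, $J_+J_-J_+J_-$, $J_-J_+J_-J_+$, $J_+J_-J_-J_+$, $J_-J_+J_+J_-$, act diagonally with eigenvalues $\alpha_m\alpha_{m-1}$, $\beta_m\beta_{m+1}$, $\alpha_m^2$, $\beta_m^2$, $\alpha_m\beta_m$, $\alpha_m\beta_m$, each a quartic in $m$. Requiring a linear combination to be constant in $m$ (for every value of the parameter $c_n$) is a small linear system, and its solution yields
\[
\zeta:=J_+^2J_-^2+J_-^2J_+^2-3\,J_+J_-J_+J_--3\,J_-J_+J_-J_++2\,J_+J_-J_-J_++2\,J_-J_+J_+J_-,
\]
which one checks acts on $H_n$ as the scalar $4c_n$. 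For $\xi$ I instead fix the number of drift letters to be two: I allow words with exactly two $f_0$'s and exactly one $J_+$ and one $J_-$, so that every monomial has $\kappa=(2,2)$ and $\kappa_1(\xi)=2>0$. The point that makes this legitimate is that $\kappa_2$ counts $f_1$ and $f_2$ together, so expanding $J_\pm$ back into $f_1,f_2$ cannot disturb the uniform $\kappa$. Because $f_0$ merely multiplies by the current weight, each such word is again diagonal, with eigenvalue a product of the two weights seen by the $f_0$'s times $\alpha_m$ or $\beta_m$; imposing constancy in $m$ gives another small linear system whose solution is
\[
\xi:=2\,f_0^2\!\left(J_+J_-+J_-J_+\right)-3\left(J_+f_0J_-f_0+J_-f_0J_+f_0\right)+\left(J_+f_0^2J_-+J_-f_0^2J_+\right),
\]
acting on $H_n$ as the scalar $-2c_n$. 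Rescaling (for instance replacing $\zeta$ by $-\tfrac12\zeta$) makes both act as the common scalar $\lambda(n)=-2c_n=2n(n+1)$, which is nonzero for every $n\ge1$.

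Two points remain, and one of them is the real difficulty. First, reality: complex conjugation on $\g^\C$ fixes $f_0$ and swaps $J_+\leftrightarrow J_-$, and both displayed expressions are invariant under this swap (the paired words carry equal coefficients); hence $\xi,\zeta$ lie in the real algebra $T(\g)$, as required. Second, and this is the main obstacle I anticipate, is producing $\xi$ with $\kappa_1>0$ \emph{and} well-defined $\kappa$: the obvious scalar, the Casimir $f_0^2+f_1^2+f_2^2$, fails because its three terms have $\kappa=(2,0)$ and $(0,2)$, and the tempting idea of transporting $\zeta$ by the cyclic automorphism $f_0\mapsto f_1\mapsto f_2\mapsto f_0$ also fails, because after expanding the ladder words the number of $f_1$'s, hence of $f_0$'s in the image, varies from term to term and $\kappa_1$ is no longer defined. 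The ladder-operator device above is exactly what circumvents this: by keeping the count of $f_0$'s fixed and letting $\kappa_2$ absorb the $f_1/f_2$ split, one obtains a genuinely weight-zero, uniform-$\kappa$ element that is nonetheless forced to be a multiple of the Casimir, since it is scalar on every $H_n$, which is what pins down $\lambda(n)$ as a nonzero multiple of $n(n+1)$.
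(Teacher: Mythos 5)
Your proposal is correct, but it reaches the required elements by a genuinely different route than the paper. The paper's proof (Lemma~\ref{lem:defxizeta} together with Lemma~\ref{lem:diagonaloperator}) manufactures $\xi$ and $\zeta$ as rewritings of the Casimir element $\eta^*=\sum_{i=0}^2 f_i^2$ modulo the defining ideal of $U(\g)$: substituting $f_0^2\equiv f_0f_1f_2-f_0f_2f_1$ (etc.), and then $f_0\equiv f_1f_2-f_2f_1$, yields elements with $\kappa(\xi)=(1,2)$ and $\kappa(\zeta)=(0,4)$ that are \emph{congruent to $\eta^*$ in $U(\g)$}, so the scalar action $\lambda=-n(n+1)$ is inherited from the Casimir in one stroke via Schur's lemma. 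You never insist on congruence with $\eta^*$: you enumerate weight-zero words in $f_0$ and the ladder elements $J_\pm$, use that the weight spaces of the irreducible $H_n^\C$ are one-dimensional so every weight-zero word acts diagonally, and solve a small linear system forcing the eigenvalue to be independent of the weight. I checked your computations: with $\alpha_m=m^2-m+c_n$, $\beta_m=m^2+m+c_n$ (so $\beta_m=\alpha_{m+1}$), your $\zeta$ has eigenvalue $\alpha_m\alpha_{m-1}+\beta_m\beta_{m+1}-3\alpha_m^2-3\beta_m^2+4\alpha_m\beta_m=4c_n$ and your $\xi$ has eigenvalue $-(m+1)\alpha_m+(m-1)\beta_m=-2c_n$, both constant in $m$ and nonzero for $n\ge 1$; the conjugation-symmetry argument for realness and the uniform-$\kappa$ bookkeeping are also correct. (Your sign convention $f_0\mapsto \mathrm{i}m$ differs from the paper's triplet~\eqref{eq:triplet}, but it is internally consistent and the relabeling $m\mapsto -m$ is immaterial.) Your method costs more computation but is more flexible: it is a recipe for producing scalar-acting elements of any prescribed $\kappa$-type, without guessing a substitution that preserves $U(\g)$-equivalence. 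One minor overstatement: your closing claim that such an element is ``forced to be a multiple of the Casimir'' is stronger than what scalar action on the odd-dimensional representations $H_n$ alone gives, but nothing in your argument rests on it.

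One caveat worth flagging for how the proposition is used later: your $\xi$ has $\kappa(\xi)=(2,2)$ rather than the paper's $(1,2)$. This satisfies the proposition verbatim, since only $\kappa_1(\xi)>0$ is demanded; however, in Lemma~\ref{lem:sssssss} the paper needs $\{\sigma^{\kappa(\xi)},\sigma^{\kappa(\zeta)}\}$ to separate points of $\Sigma=[a_1,b_1]\times[a_2,b_2]$, and the pair $\{\sigma_1^2\sigma_2^2,\ \sigma_2^4\}$ fails to separate $(\sigma_1,\sigma_2)$ from $(-\sigma_1,\sigma_2)$ whenever $a_1<0<b_1$ --- and the paper imposes no sign restriction on $[a_1,b_1]$. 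The fix stays entirely within your framework: run the same interpolation on the length-three weight-zero words. For instance,
\begin{equation*}
\xi':=\mathrm{i}\left(2f_0J_+J_- - J_+f_0J_- - 2f_0J_-J_+ + J_-f_0J_+\right)
\end{equation*}
is fixed by conjugation (hence lies in $T(\g)$), has uniform $\kappa(\xi')=(1,2)$, and acts on $H_n$ as the same scalar $-2c_n=2n(n+1)$; replacing your $\xi$ by $\xi'$ recovers the odd power of $\sigma_1$ that the downstream separation argument requires.
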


We establish below Prop.~\ref{thm:xixi}. We will explicitly construct $\xi$ and $\zeta$ in Sec.~\ref{ssec:casimir} and show that they satisfy the two items toward the end of the section.  

\subsection{Variations of the Casimir element}\label{ssec:casimir}
Recall that the space $T(\g)$ can be identified with the space of all tensors in $\g^{\otimes k}$ for all $k\ge 0$, i.e., we identify $f_\alpha = f_{i_1}\cdots f_{i_k}$ with $f_{i_1}\otimes \cdots \otimes f_{i_k}$.   
The so-called universal enveloping algebra associated with~$\g$ is defined as follows:  

\begin{definition}
	Let~$J$ be a two sided ideal in $T(\g)$ generated by all $fg - gf - [f, g]$ where $f, g\in \g$. Then, the {\bf universal enveloping algebra} $U(\g)$ is given by the following quotient:   
$$
U(\g):= T(\g)/ J. 
$$    
\end{definition}

We also need the following definition:

\begin{definition}
The {\bf center} $Z(\g)$ of $U(\g)$ is the collection of elements in $U(g)$ that commute with the entire~$U(\g)$, i.e.,  
$$
Z(\g):= \{\eta\in U(\g) \mid \eta' \eta = \eta \eta', \mbox{ for all } \eta' \in U(\g)\}. 
$$
\end{definition}

We present in the following lemma a specific element in $Z(\g)$.  The result is, in fact, well known: 

\begin{lemma}\label{lem:casimir}
Let 
$\eta^*:= \sum^2_{i = 0} f^2_i$. Then, $\eta^*$ belongs to $Z(\g)$.   
\end{lemma}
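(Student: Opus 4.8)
The plan is to verify directly that $\eta^* = \sum_{i=0}^2 f_i^2$ commutes with each generator $f_0, f_1, f_2$ of $\g$ inside $U(\g)$; since these generators generate $U(\g)$ as an algebra and the centralizer of a generating set is all of $U(\g)$, this suffices to conclude $\eta^* \in Z(\g)$. Concretely, I would compute the commutator $[f_k, \eta^*] = \sum_{i=0}^2 [f_k, f_i^2]$ for each $k \in \{0,1,2\}$ and show it vanishes in $U(\g)$. The essential structural input is the bracket relation already recorded in the excerpt, namely $[f_i, f_j] = f_k$ whenever $(i,j,k)$ is a cyclic rotation of $(0,1,2)$, and the fact that in $U(\g)$ we may rewrite $fg - gf$ as $[f,g]$ (this is exactly the content of the defining ideal $J$).

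The key computation rests on the elementary identity $[f_k, f_i^2] = [f_k, f_i] f_i + f_i [f_k, f_i]$, valid in any associative algebra. First I would fix $k$ and split the sum $\sum_i [f_k, f_i^2]$ into the $i = k$ term, which vanishes since $[f_k, f_k] = 0$, and the two remaining terms. For these I would substitute the cyclic bracket relations to express each $[f_k, f_i]$ as $\pm$ the third generator. Take $k = 0$ as the representative case: using $[f_0, f_1] = f_2$ and $[f_0, f_2] = [f_0, f_2]$, which by antisymmetry and the cyclic rule equals $-f_1$, one gets
\begin{equation}
[f_0, \eta^*] = [f_0, f_1] f_1 + f_1 [f_0, f_1] + [f_0, f_2] f_2 + f_2 [f_0, f_2] = f_2 f_1 + f_1 f_2 - f_1 f_2 - f_2 f_1 = 0.
\end{equation}
The two anticommutator-like pairs cancel exactly. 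I would then note that the computations for $k = 1$ and $k = 2$ are identical up to the cyclic relabeling $(0,1,2) \mapsto (1,2,0) \mapsto (2,0,1)$, so by symmetry they also vanish; writing out one of them in full and invoking symmetry for the others keeps the proof short.

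I do not expect a genuine obstacle here, since this is the classical Casimir-element computation for $\so(3)$ and the cyclic structure makes all cancellations automatic. The only point requiring minor care is bookkeeping of signs: one must correctly read off $[f_i, f_j]$ when $(i,j)$ is not in the standard cyclic order, i.e.\ use $[f_j, f_i] = -[f_i, f_j]$ together with the stated cyclic rule, so that for instance $[f_0, f_2] = -f_1$ rather than $+f_1$. Getting these three signs right is what guarantees the pairwise cancellation rather than reinforcement. Once the commutators with all three generators are shown to be zero, I would conclude that $\eta^*$ lies in the centralizer of a generating set of $U(\g)$ and therefore in $Z(\g)$, completing the proof.
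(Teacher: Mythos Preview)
Your proposal is correct and follows essentially the same approach as the paper's proof: both verify directly that $\eta^*$ commutes with each generator $f_k$ by expanding $[f_k, f_i^2]$ via the Leibniz-type identity and observing the pairwise cancellation coming from the cyclic bracket relations, then invoke symmetry to handle all three generators. The only cosmetic difference is that the paper writes out $f_0 f_i^2$ and compares it to $f_i^2 f_0$, whereas you phrase the same computation in commutator notation.
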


We provide a proof of the lemma in Appendix-A.

\begin{definition}  
The element~$\eta^* = \sum^2_{i = 0} f^2_i$ is commonly referred to as the {\bf Casimir element}. 
\end{definition}

\begin{remark}{\em 
Note that if an element $\eta$ belongs to $Z(\g)$, then any polynomial in~$\eta$ (i.e., $\sum^n_{k = 0} c_k \eta^k$) belongs to~$Z(\g)$ as well. The converse also holds for the case here. Precisely, it is known~\cite[Ch.~V]{knapp2013lie} that if $\g \approx \so(3) \approx \mathfrak{su}(2)$, then the center $Z(\g)$ is exactly the space of all polynomials in~$\eta^*$. We further note that for a general (complex) semi-simple Lie algebra, the center of the associated universal enveloping algebra can be characterized via the Harish-Chandra isomorphism~\cite[Theorem~5.44]{knapp2013lie}. 
}
\end{remark}

However, note that if we treat the Casimir element $\eta^*$ as an element in $T(\g)$, then $\kappa(\eta^*)$ is not well defined. To see this, we simply note that $$\kappa(f^2_0) = (2, 0) \quad  \mbox{and} \quad \kappa(f^2_1) = \kappa(f^2_2) = (0,2).$$ 
We thus aim to find elements~$\xi$ and~$\zeta$ in $T(\g)$ that satisfy the following two conditions: 
\begin{enumerate}
\item[\em (1)] Both $\kappa(\xi)$ and $\kappa(\zeta)$ are well defined and satisfy item~(1) of Prop.~\ref{thm:xixi}.    
\item[\em (2)] The two elements $\xi$ and $\zeta$ are the same as the Casimir element $\eta^*$ when they are treated as elements in $U(\g)$, i.e., all the three elements are equivalent modulo the ideal~$J$ (we will simply write $\xi\equiv \zeta \equiv \eta^*$).  
\end{enumerate}     
One way to find such elements~$\xi$ and~$\zeta$ is to use the commutator relations: $[f_i, f_j] = f_k$ where $(i,j,k)$ is a cyclic rotation of $(0,1,2)$.  
We have the following result:   
 
\begin{lemma}\label{lem:defxizeta}   
Let $\xi, \zeta \in T(\g)$ be defined as follows:
$$
\left\{
\begin{array}{l}
\xi:= f_0f_1f_2 + f_1f_2f_0 + f_2f_0f_1 \\
\qquad \qquad\qquad - f_0f_2f_1  - f_1f_0f_2  - f_2f_1f_0, \vspace{3pt} \\
\zeta:= 3(f_1f_2f_1f_2 + f_2f_1f_2f_1) \\ 
\qquad \qquad\qquad -2(f_1f^2_2f_1 + f_2f_1^2 f_2) - (f^2_1f^2_2 + f^2_2f^2_1).  
\end{array}
\right.
$$
Then,  $\xi\equiv \zeta\equiv\eta^*$ with $\kappa(\xi) = (1, 2)$ and $\kappa(\zeta) = (0,4)$.
\end{lemma}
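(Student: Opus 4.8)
The two $\kappa$-assertions are immediate from counting. Every monomial of $\xi$ is a permutation of $f_0f_1f_2$, so each contains exactly one ``$0$'' and two symbols from $\{``1",``2"\}$, giving $\kappa(\xi)=(1,2)$; every monomial of $\zeta$ is a word of length four in $f_1,f_2$ alone, giving $\kappa(\zeta)=(0,4)$. In particular both values are well defined. The substance of the lemma is the pair of congruences $\xi\equiv\zeta\equiv\eta^*$ modulo the ideal $J$, i.e.\ the equalities $\xi=\zeta=\eta^*$ inside $U(\g)$. The plan is to prove both by reducing $\xi$ and $\zeta$ to $\eta^*=f_0^2+f_1^2+f_2^2$ using only the structure relations $f_if_j-f_jf_i=f_k$ for cyclic $(i,j,k)$ that hold in $U(\g)$.

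For $\xi$ the reduction is immediate. I would group the six terms into the three pairs $f_0f_1f_2-f_0f_2f_1$, $f_1f_2f_0-f_1f_0f_2$, and $f_2f_0f_1-f_2f_1f_0$. Factoring the common left factor out of each pair turns it into $f_i(f_jf_k-f_kf_j)=f_i[f_j,f_k]=f_if_i=f_i^2$ for the appropriate cyclic $(i,j,k)$, so that $\xi=f_0^2+f_1^2+f_2^2=\eta^*$ on the nose in $U(\g)$.

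For $\zeta$ I would work with $f_0=f_1f_2-f_2f_1$ and first record $f_0^2=f_1f_2f_1f_2+f_2f_1f_2f_1-f_1f_2^2f_1-f_2f_1^2f_2$, which accounts for the first two groups of $\zeta$. The key step is to rewrite the two ``outer'' monomials using only the single commutation $f_1f_2-f_2f_1=f_0$: one checks $f_1^2f_2^2=f_1f_2f_1f_2+f_1f_0f_2$ and $f_2^2f_1^2=f_2f_1f_2f_1-f_2f_0f_1$. Substituting these into the definition collapses all the degree-four terms into $2f_0^2$, leaving $\zeta=2f_0^2-f_1f_0f_2+f_2f_0f_1$. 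Finally I would simplify the cross term with the remaining two cyclic relations $f_1f_0=f_0f_1-f_2$ and $f_2f_0=f_0f_2+f_1$, which give $-f_1f_0f_2+f_2f_0f_1=-f_0(f_1f_2-f_2f_1)+f_1^2+f_2^2=-f_0^2+f_1^2+f_2^2$; hence $\zeta=f_0^2+f_1^2+f_2^2=\eta^*$.

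The computation for $\xi$ is essentially free, so the only real work — and the main place to be careful — is the $\zeta$ reduction. There is no conceptual obstacle: it is a finite manipulation in the noncommutative algebra $U(\g)$. The practical difficulty is purely bookkeeping, because three distinct cyclic relations are in play and intermediate $f_0$ factors are created and must be reabsorbed, so sign and ordering errors are easy to make. The strategy that keeps this under control is to perform the $f_1f_2$-swaps first, exploiting that $\zeta$ involves only $f_1,f_2$ (matching $\kappa_1(\zeta)=0$), in order to isolate a clean $2f_0^2$, and only afterward introduce the other two relations to dispose of the single remaining cross term.
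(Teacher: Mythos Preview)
Your argument is correct. The treatment of $\xi$ is exactly the paper's proof read backwards: the paper writes $f_i^2\equiv f_i(f_jf_k-f_kf_j)$ for each cyclic $(i,j,k)$ and sums, whereas you group the six terms of $\xi$ into those same three pairs and collapse each to $f_i^2$.

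For $\zeta$ the two routes genuinely differ. The paper does not reduce $\zeta$ to $\eta^*$ at all; instead, having already shown $\xi\equiv\eta^*$, it simply substitutes $f_0\mapsto f_1f_2-f_2f_1$ into each of the six monomials of $\xi$ and verifies that the resulting expression is term-for-term the definition of $\zeta$. This is a pure expansion with no use of the other two commutator relations, and it simultaneously explains how $\zeta$ was found. Your approach goes in the opposite direction, reducing $\zeta$ directly to $\eta^*$: you first use $[f_1,f_2]=f_0$ to extract $2f_0^2$ and a residual cross term $-f_1f_0f_2+f_2f_0f_1$, and then invoke the remaining relations $[f_0,f_1]=f_2$ and $[f_2,f_0]=f_1$ to finish. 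This is a perfectly valid computation, but it creates intermediate $f_0$'s that must then be removed, which is exactly the bookkeeping hazard you flag. The paper's substitution route avoids that hazard entirely and is shorter; your route, on the other hand, is self-contained and does not rely on first having $\xi$ in hand.
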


\begin{proof}{Proof.}
The lemma follows directly from computation. Specifically, we note that 
$$
\left\{
\begin{array}{l}
	f^2_0 \equiv f_0f_1f_2 - f_0f_2f_1, \vspace{3pt} \\
	f^2_1 \equiv f_1f_2f_0 - f_1f_0f_2, \vspace{3pt} \\
	f^2_2 \equiv f_2f_0f_1 - f_2f_1f_0.
\end{array}
\right.
$$
The element $\xi$ is then obtained by replacing $f^2_i$ for $i = 0,1,2$ in~$\eta^*$ with the terms on the right hand side of the above expression. Further, by replacing each~$f_0$ in the expression of~$\xi$ with $(f_1f_2 - f_2f_1)$, we obtain~$\zeta$. 
\end{proof}


Toward the end of the section, we will show that the two elements~$\xi$ and~$\zeta$ defined in Lemma~\ref{lem:defxizeta} satisfy item~(2) of Prop.~\ref{thm:xixi}. For that, we need to have a few preliminaries about irreducible representation of~$\sl(2,\C)$, some of which will further be used in the proof of Theorem~\ref{thm:mthm}. This will be done in the next subsection.

\subsection{Irreducible representation of $\sl(2, \C)$}
Recall that $\BP_n$ is the (real) vector space of all homogeneous polynomials of degree~$n$ in variables $x_1$, $x_2$, and $x_3$. The space $\BP_n$ is closed under directional derivative along any vector field $f\in \g$. The map $\pi: \g\times \BP_n \to \BP_n$ defined by
$$
\pi: (f, p) \mapsto \pi(f)p := fp
$$ 
is a Lie algebra representation of~$\g$ on $\BP_n$. 
We also recall that $\BP^\C_n$ is the complexification of $\BP_n$, i.e., $\BP^\C_n$ is the space of homogeneous polynomials in (real) variables $x_1$, $x_2$, $x_3$ with complex coefficients.

One can extend $\pi$ to $\g^\C \times P^\C_n$ using~\eqref{eq:complexificationpi} so that $\pi$ is now a representation of $\g^\C$ on $\BP^\C_n$.  
Note that $\g^\C \approx \sl(2, \C)$. 
Representation of $\sl(2, \C)$ is extensively investigated in the literature~\cite{hall2015lie,knapp2013lie,humphreys2012introduction}. We review in the subsection only a few basic facts that are relevant to the paper.

To this end, we define a triplet $(h, e_+, e_-)$ of elements in $\g^\C$ using the three elements $\{f_i\}^2_{i =0}$ from~$\g$ as follows:
\begin{equation}\label{eq:triplet}
h:= 2\mathrm{i} f_0, \quad e_+ := f_1 + \mathrm{i} f_2, \quad e_-:= -f_1 + \mathrm{i} f_2,
\end{equation}
Then, by computation, we have the following standard commutator relationship for the triplet $(h, e_+, e_-)$ in~$\sl(2,\C)$:  
$$
[h, e_+] = 2 e_+, \quad [h, e_-] = -2e_-, \quad [e_+, e_-] = h. 
$$ 
Denote by $\C h$, $\C e_+$, and $\C e_-$ the vector spaces (over $\C$) spanned by $h$, $e_+$, and $e_-$, respectively. Then, $\C h$ is known as a {\em Cartan subalgebra} of $\sl(2, \C)$ while $\C e_+$ and $\C e_-$ are the two {\em root spaces}. Recall that an arbitrary representation $\pi: \sl(2, \C)\times V \to V$ is irreducible if there does not exist a nonzero, proper subspace~$V'$ of~$V$ such that $\pi(V') \subseteq V'$. The following result is well-known (see, for example,~\cite{hall2015lie}) for finite-dimensional irreducible representations of~$\sl(2, \C)$:

\begin{lemma}\label{lem:irredrep}
Let $\pi: \sl(2, \C) \times V \to V$ be an arbitrary irreducible representation of $\sl(2, \C)$ on a (complex) vector space~$V$ of dimension~$(n + 1)$ for $n\ge 0$. Then, $V$ can be decomposed as a direct sum of one-dimensional subspaces $V = \oplus^{n}_{k = 0}V_{n - 2k}$,  which satisfy the following conditions:  
$$
\pi(e_+) V_{n - 2k} = V_{n - 2k + 2}, \quad \pi(e_-) V_{n - 2k} = V_{n - 2k - 2}.
$$
Moreover, for any $v\in V_{n - 2k}$,  $\pi(h)v = (n - 2k)v$. 
\end{lemma}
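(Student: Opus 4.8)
The plan is to run the standard highest-weight (ladder-operator) argument for $\sl(2,\C)$, built entirely on the commutator relations $[h,e_+]=2e_+$, $[h,e_-]=-2e_-$, and $[e_+,e_-]=h$ recorded just above. The three steps are: locate a highest weight vector, generate a descending ladder from it, and invoke irreducibility to pin down both the dimension and the top weight.

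First I would locate a highest weight vector. Since $V$ is a finite-dimensional vector space over $\C$, the operator $\pi(h)$ has at least one eigenvalue $\lambda$ with an eigenvector $v \neq 0$; this is the single place where algebraic closedness of the scalar field is used, and it is precisely why the representation has been complexified. Applying $\pi$ to $[h,e_+]=2e_+$ gives $\pi(h)\pi(e_+) - \pi(e_+)\pi(h) = 2\pi(e_+)$, so that $\pi(h)\pi(e_+)v = (\lambda+2)\pi(e_+)v$; thus $\pi(e_+)$ raises the $h$-eigenvalue by $2$, and likewise $\pi(e_-)$ lowers it by $2$. Because $\pi(h)$ has only finitely many eigenvalues, iterating $\pi(e_+)$ must eventually return $0$, which yields a vector $v_0 \neq 0$ with $\pi(e_+)v_0 = 0$ and $\pi(h)v_0 = \mu v_0$ for some scalar $\mu$.

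Next I would build the descending ladder $v_k := \pi(e_-)^k v_0$. The eigenvalue computation gives $\pi(h)v_k = (\mu - 2k)v_k$, so the nonzero $v_k$ lie in distinct $h$-eigenspaces and are therefore linearly independent; finite-dimensionality forces a smallest index $m$ with $v_{m+1}=0$. The key computation is the inductive identity
$$\pi(e_+)v_k = k(\mu - k + 1)\,v_{k-1},$$
proved by induction on $k$ using $[e_+,e_-]=h$ (together with $\pi(e_+)\pi(e_-) = \pi(e_-)\pi(e_+) + \pi(h)$). I expect this identity, and then correctly reading off $\mu$ at the bottom of the ladder, to be the main technical content of the argument.

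Finally I would invoke irreducibility. The subspace $W := \span\{v_0,\ldots,v_m\}$ is invariant under $\pi(h)$, $\pi(e_+)$, and $\pi(e_-)$, hence under all of $\sl(2,\C)$; since $W \neq 0$ and $V$ is irreducible, $W = V$, so $\dim V = m+1$ and therefore $m = n$. Evaluating the inductive identity at $k = m+1$ gives $0 = \pi(e_+)v_{m+1} = (m+1)(\mu - m)v_m$, and since $v_m \neq 0$ we conclude $\mu = m = n$. Thus the $h$-weights are exactly $n, n-2, \ldots, -n$; setting $V_{n-2k} := \C v_k$ produces the claimed one-dimensional decomposition $V = \oplus_{k=0}^n V_{n-2k}$ with $\pi(h)v = (n-2k)v$ on $V_{n-2k}$. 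The raising and lowering statements $\pi(e_+)V_{n-2k} = V_{n-2k+2}$ and $\pi(e_-)V_{n-2k} = V_{n-2k-2}$ then follow from $\pi(e_-)v_k = v_{k+1}$ and $\pi(e_+)v_k = k(n-k+1)v_{k-1}$, under the convention that $V_j = 0$ for weights $j$ outside the list, using that the coefficient $k(n-k+1)$ is nonzero for $1 \le k \le n$. The only real obstacle is bookkeeping at the two ends of the ladder; the rest is the routine induction above.
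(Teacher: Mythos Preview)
Your argument is the standard highest-weight ladder proof and is correct. Note, however, that the paper does not actually supply a proof of this lemma: it simply records the result as well known and cites a textbook (Hall, \emph{Lie Groups, Lie Algebras, and Representations}). What you have written is precisely the proof one finds in that reference, so there is nothing to compare against; your proposal would serve perfectly well as a self-contained justification should one be desired.
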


\begin{definition}\label{def:weight}
The subspaces $V_{n - 2k}$ in the above lemma are {\bf weight spaces},  and the integers $(n - 2k)$ are {\bf weights}. The weight~$n$ (i.e., $k = 0$) is called the {\bf highest weight} and,  correspondingly, any nonzero vector~$v$ in $V_n$ is called a {\bf highest weight vector}. 
\end{definition}

Note that by Lemma~\ref{lem:irredrep}, if~$v$ is a highest weight vector (of weight~$n$), then the set of vectors $\{v, \pi(e_-)v, \cdots, \pi^n(e_-)v\}$ is a basis of~$V$. Each one-dimensional weight space $V_{n - 2k}$ is spanned by the vector $\pi^k(e_-)v$. Conversely, we have the following fact:

\begin{lemma}\label{lem:highestweight}
	Let $\pi: \sl(2,\C)\times V\to V$ be an arbitrary representation (not necessarily irreducible). Suppose that there is a nonzero vector~$v\in V$ and an integer $n \ge 0$ such that 
	$$
	\pi(h) v = n v \quad \mbox{and} \quad \pi(e_+)v = 0;
	$$
	then, the subspace $V'$ spanned by $\{v, \pi(e_-)v, \cdots, \pi^n(e_-)v\}$ is an invariant subspace of $V$ under $\pi(\sl(2, \C))$. 
	Let $\pi'$ be defined by restricting $\pi$ to $\sl(2, \C)\times V'$, then $\pi'$ is an irreducible representation of $\sl(2,\C)$ on~$V'$ with~$n$ the highest weight and~$v$ a highest weight vector. 
\end{lemma}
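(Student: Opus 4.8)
The plan is to track the single chain of vectors generated from $v$ by repeatedly applying $\pi(e_-)$, and to show that the standard commutator relations force this chain to be an $\sl(2,\C)$-stable ``ladder'' of length exactly $n+1$. I would write $v_k := \pi^k(e_-)v$ for $k\ge 0$, so that $v_0 = v$ and $V' = \span\{v_0,\dots,v_n\}$. The whole proof rests on three identities describing how $\pi(h)$, $\pi(e_+)$, and $\pi(e_-)$ act on each $v_k$; everything else follows from them.

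First, using $[h,e_-]=-2e_-$ together with relation~\eqref{eq:representation}, I would prove by induction on $k$ that $v_k$ is a weight vector,
$$\pi(h)v_k = (n-2k)\,v_k,$$
the base case being the hypothesis $\pi(h)v_0 = nv_0$ and the inductive step using $\pi(h)\pi(e_-) = \pi(e_-)\pi(h) - 2\pi(e_-)$. Next, using $[e_+,e_-]=h$ together with the base case $\pi(e_+)v_0 = 0$, I would establish by induction the crucial ladder identity
$$\pi(e_+)v_k = k(n-k+1)\,v_{k-1}.$$
Here the inductive step writes $\pi(e_+)v_k = \pi(e_-)\pi(e_+)v_{k-1} + \pi(h)v_{k-1}$ and collects the two already-known contributions; the coefficient $k(n-k+1)$ drops out after a short arithmetic simplification. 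By definition $\pi(e_-)v_k = v_{k+1}$, which is the third identity.

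The main obstacle is to show that the chain terminates precisely at step $n$, i.e. that $v_{n+1}=0$, since otherwise $V'$ would fail to be $\pi(e_-)$-invariant (the vector $v_{n+1}$ would carry the new weight $-n-2$ and lie outside $\span\{v_0,\dots,v_n\}$). I would argue as follows. The nonzero $v_k$ are eigenvectors of $\pi(h)$ for the pairwise distinct eigenvalues $n-2k$, hence are linearly independent; as $V$ is finite-dimensional, only finitely many of them can be nonzero. Let $m$ be the largest index with $v_m\neq 0$, so $v_{m+1}=0$. Evaluating the ladder identity at $k=m+1$ gives $0 = \pi(e_+)v_{m+1} = (m+1)(n-m)v_m$, and since $v_m\neq 0$ and $m+1>0$ this forces $m=n$. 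Consequently $v_0,\dots,v_n$ are nonzero and independent while $v_{n+1}=0$, so $\dim V' = n+1$. Invariance of $V'$ under $\pi(\sl(2,\C))$ then follows because $h,e_+,e_-$ span $\sl(2,\C)$ and the three identities show that $\pi(h)$, $\pi(e_+)$, $\pi(e_-)$ each map every $v_k$ back into $\span\{v_0,\dots,v_n\}$ (with $v_{-1}:=0$ and $v_{n+1}=0$).

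It remains to prove that the restricted representation $\pi'$ is irreducible with highest weight $n$. Let $W\subseteq V'$ be a nonzero invariant subspace. Since $W$ is stable under the diagonalizable operator $\pi(h)$, whose eigenspaces in $V'$ are the one-dimensional lines $\C v_k$ with distinct eigenvalues, $W$ is the span of exactly those $v_k$ that it contains. Taking the least index $k_0$ with $v_{k_0}\in W$ and applying $\pi(e_+)$, the identity $\pi(e_+)v_{k_0}=k_0(n-k_0+1)v_{k_0-1}$ has nonzero coefficient whenever $1\le k_0\le n$; minimality then forces $k_0=0$, i.e. $v=v_0\in W$. Applying $\pi(e_-)$ repeatedly yields $v_k\in W$ for all $k$, so $W=V'$ and $\pi'$ is irreducible. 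Finally, $v=v_0$ has weight $n$, which is the largest weight occurring in $V'$, and satisfies $\pi(e_+)v=0$; hence $v$ is a highest weight vector and $n$ is the highest weight, as claimed.
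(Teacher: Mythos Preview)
Your argument is correct and complete. The paper itself does not supply a proof of this lemma; it simply remarks that the statement ``is an application of the Theorem of Highest Weight'' and cites Knapp. What you have written is precisely the standard elementary ladder proof that underlies that theorem in the $\sl(2,\C)$ case: derive the weight formula $\pi(h)v_k=(n-2k)v_k$ and the raising identity $\pi(e_+)v_k=k(n-k+1)v_{k-1}$ by induction from the commutator relations, use them to show $v_{n+1}=0$, and then run the usual weight-space argument for irreducibility. So your route is more self-contained than the paper's citation, at the cost of a page of computation; the content is the same.

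One small remark: you invoke finite-dimensionality of $V$ to force the chain $v_0,v_1,\dots$ to terminate. That hypothesis is not written in the lemma's statement, and without it the conclusion can genuinely fail (take $V$ to be the Verma module of highest weight $n$, where $v_{n+1}\neq 0$ and the span of $v_0,\dots,v_n$ is not $\pi(e_-)$-stable). In the paper's context every representation in sight lives on a finite-dimensional space of polynomials, so the assumption is harmless, but it is worth flagging that you are using it.
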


The above lemma is an application of the Theorem of Highest Weight~\cite[Theorem~5.5]{knapp2013lie}. 

We now return to the representation  $\pi: \g^\C\times \BP^\C_n \to \BP^\C_n$. We will see soon that $\pi$ is not irreducible. But, by the unitarian trick (see, for example,~\cite[Theorem~5.29]{knapp2013lie}), any finite-dimensional representation of a complex semi-simple Lie algebra is completely reducible. Specifically, we first recall that $\BH_n$ is the (real) vector space of harmonic homogeneous polynomials of degree~$n$. We let $\BH^\C_n$ be its complexification. 
Then, we have the following decomposition:  
$$\BP^\C_n =  \BH^\C_{n} \oplus \|x\|^2\BH^\C_{n - 2} \oplus \cdots \oplus \|x\|^{2\lfloor \nicefrac{n}{2} \rfloor} \BH^\C_{n -2 \lfloor \nicefrac{n}{2} \rfloor},$$ 
where $\|x\|^2= \sum^3_{i =1} x^2_i$. The following fact is well-known~\cite[Ch.~17]{hall2013quantum}:
 
\begin{lemma}\label{prop:harmonicpolynomial}
	The subspace $\|x\|^{2k}\BH^\C_{n - 2k}$ is invariant under $\pi(\g^\C)$ for any $k = 0,\ldots, \lfloor \nicefrac{n}{2} \rfloor$.  
	Define $$\pi_k: \g^\C \times \|x\|^{2k}\BH^\C_{n-2k}\to \|x\|^{2k}\BH^\C_{n-2k}$$ by restricting~$\pi$ to $\g^\C \times \|x\|^{2k}\BH^\C_{n-2k}$. Then, $\pi_k$ is an irreducible representation with $2(n - 2k)$ the highest weight and 
	$$
	p^*_{k} := \|x\|^{2k} (x_1 + \mathrm{i} x_2)^{n - 2k}  
	$$
	a highest weight vector.  
\end{lemma}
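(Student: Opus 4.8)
The plan is to verify the three assertions—invariance of the subspace, the identification of $p^*_k$ as a highest weight vector, and irreducibility of $\pi_k$—in that order, using the explicit triplet $(h,e_+,e_-)$ from~\eqref{eq:triplet} together with Lemma~\ref{lem:highestweight}.

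First I would establish invariance. A direct computation from~\eqref{eq:vectorfields} shows that each $f_i$ annihilates $\|x\|^2$, i.e. $f_i\|x\|^2 = 0$ for $i=0,1,2$ (this reflects the fact that the $f_i$ generate rotations, which preserve the norm). Since each $\pi(f_i)$ acts as a derivation on polynomials, the Leibniz rule gives $f_i(\|x\|^{2k}q) = \|x\|^{2k}(f_i q)$ for every polynomial $q$. Hence it suffices to show that $\BH^\C_{n-2k}$ itself is invariant. For this I would check that each $f_i$ commutes with the Laplacian, $[f_i,\Laplace]=0$; this follows either from the rotation-invariance of $\Laplace$ or by a short direct computation (e.g. for $f_0 = x_2\partial_1 - x_1\partial_2$ the two cross terms $\pm 2\partial_1\partial_2$ cancel). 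Consequently, if $\Laplace q = 0$ then $\Laplace(f_i q) = f_i(\Laplace q) = 0$, so $f_i q$ is again harmonic of the same degree; thus $\BH_{n-2k}$ is invariant under $\pi(\g)$ and, via~\eqref{eq:complexificationpi}, $\BH^\C_{n-2k}$ is invariant under $\pi(\g^\C)$. Combining this with the derivation identity yields invariance of $\|x\|^{2k}\BH^\C_{n-2k}$, so the restriction $\pi_k$ is a well-defined representation.

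Next I would identify $p^*_k = \|x\|^{2k}(x_1 + \mathrm{i}x_2)^{n-2k}$ as a highest weight vector. Writing $m := n-2k$, one checks directly that $(x_1+\mathrm{i}x_2)^m$ is harmonic (the $\partial_1^2$ and $\partial_2^2$ contributions cancel because $\mathrm{i}^2 = -1$), so $p^*_k$ indeed lies in $\|x\|^{2k}\BH^\C_{n-2k}$ and is nonzero for $0 \le k \le \lfloor n/2 \rfloor$. Using $f_i(\|x\|^{2k}q)=\|x\|^{2k}(f_iq)$ it is enough to compute the action on $(x_1+\mathrm{i}x_2)^m$. From $f_0(x_1+\mathrm{i}x_2)^m = -\mathrm{i}m(x_1+\mathrm{i}x_2)^m$ and $h = 2\mathrm{i}f_0$ I obtain $\pi(h)p^*_k = 2m\,p^*_k = 2(n-2k)\,p^*_k$; and from $e_+ = f_1 + \mathrm{i}f_2$ together with $f_1(x_1+\mathrm{i}x_2)^m = m x_3(x_1+\mathrm{i}x_2)^{m-1}$ and the analogous formula for $f_2$, the factor $1+\mathrm{i}^2 = 0$ forces $\pi(e_+)p^*_k = 0$. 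Thus $p^*_k$ satisfies the hypotheses of Lemma~\ref{lem:highestweight} with the nonnegative integer $N = 2(n-2k)$.

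Finally I would invoke Lemma~\ref{lem:highestweight}: the span of $\{p^*_k, \pi(e_-)p^*_k, \ldots, \pi^{2(n-2k)}(e_-)p^*_k\}$ is an irreducible invariant subspace of dimension $2(n-2k)+1$. Since multiplication by $\|x\|^{2k}$ is injective and $\dim_\C \BH^\C_{n-2k} = 2(n-2k)+1$, the ambient space $\|x\|^{2k}\BH^\C_{n-2k}$ also has dimension $2(n-2k)+1$, so this irreducible subspace must be all of it. Hence $\pi_k$ is irreducible with highest weight $2(n-2k)$ and highest weight vector $p^*_k$, as claimed. I expect the only genuinely technical step to be the commutation $[f_i,\Laplace]=0$ underpinning invariance of the harmonic subspace; everything else reduces to the two short derivative computations for $f_0$ and $e_+$ and a dimension count.
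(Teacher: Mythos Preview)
Your proof is correct and follows essentially the same approach as the paper's: both use that $f_i\|x\|^2=0$, that $[f_i,\Laplace]=0$, the explicit verification that $\pi(h)p^*_k=2(n-2k)p^*_k$ and $\pi(e_+)p^*_k=0$, and a dimension count combined with Lemma~\ref{lem:highestweight}. The only difference is ordering—the paper first builds the irreducible submodule generated by $p^*_k$ and then shows it coincides with $\|x\|^{2k}\BH^\C_{n-2k}$, whereas you establish invariance up front—but the substance is identical.
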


We provide a proof of the lemma in Appendix-B.

\subsection{Proof of Proposition~\ref{thm:xixi}} \label{sec:proofofxixi}
We establish in the subsection Prop.~\ref{thm:xixi}. 
With slight abuse of notation, we will now let $$\pi: \g\times H_n\to H_n$$ be the representation of~$\g$ on~$H_n$. 
By Lemma~\ref{prop:harmonicpolynomial}, $\pi$ is irreducible.  
The map $\pi$ can naturally be extended to $T(\g) \times H_n$, which we have implicitly used in the section. Specifically, for any $p\in H_n$ and any $\eta\in T(\g)$, we define $\pi(\eta)p := \eta p$. 
Further, note that the relationship~\eqref{eq:representation} which we reproduce below:
$$
\pi([f, g]) =  \pi(f)\pi(g) - \pi(g)\pi(f), \quad \forall f, g\in \g.
$$ 
allows us to pass the map~$\pi$ to the quotient $U(\g)\times H_n$, i.e., if two elements~$\eta$ and~$\eta'$ in $T(\g)$ are equivalent (i.e., $\eta\equiv \eta'$), then $\eta p = \eta' p$ for any $p\in H_n$.   

Recall that $\eta^* = \sum^2_{i = 0} f^2_i $ is the Casimir element. Let $\xi$ and $\zeta$ be defined in Lemma~\ref{lem:defxizeta} and we have that $\eta^*\equiv \xi\equiv \zeta$. Then, by the above arguments, 
\begin{equation}\label{eq:xizetaeigen}
\eta^* p = \xi p = \zeta p, \quad \forall p\in H_n. 
\end{equation}
The following fact is a consequence of Shur's Lemma (see, for example, Lemma~1.69 in~\cite{knapp2013lie}):

\begin{lemma}\label{lem:diagonaloperator}
The Casimir element $\eta^*$ acts on $H_n$ as a scalar multiple of the identity operator. Specifically, for any $p\in H_n$, we have that
$$
\eta^* p = -n(n + 1) p.
$$   	
\end{lemma}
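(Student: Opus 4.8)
The plan is to split the proof into two parts, matching the two assertions of the lemma: first that $\eta^*$ acts as \emph{some} scalar, and then that this scalar equals $-n(n+1)$.

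For the first part, I would invoke Schur's Lemma. By Lemma~\ref{lem:casimir}, $\eta^*$ lies in the center $Z(\g)$, so $\pi(\eta^*)$ commutes with $\pi(f)$ for every $f\in\g$ and therefore with the entire representation. Passing to the complexification $H_n^\C$, on which $\pi$ is irreducible (Lemma~\ref{prop:harmonicpolynomial} with $k=0$), Schur's Lemma forces $\pi(\eta^*)=\lambda\,\mathrm{Id}$ for a single $\lambda\in\C$; since $\pi(\eta^*)$ preserves the real subspace $H_n$, the scalar $\lambda$ is in fact real. This establishes that $\eta^*$ acts as a scalar multiple of the identity on $H_n$.

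For the second part, I would compute $\lambda$ by evaluating $\eta^*$ on a single convenient vector, namely a highest weight vector. Inverting the definitions in~\eqref{eq:triplet} gives $f_0=-\tfrac{\mathrm{i}}{2}h$, $f_1=\tfrac12(e_+-e_-)$, and $f_2=-\tfrac{\mathrm{i}}{2}(e_++e_-)$. Substituting these into $\eta^*=f_0^2+f_1^2+f_2^2$, the squared terms $e_+^2$ and $e_-^2$ cancel between $f_1^2$ and $f_2^2$, leaving
$$
\eta^* = -\tfrac14 h^2 - \tfrac12\big(e_+e_- + e_-e_+\big).
$$
By Lemma~\ref{prop:harmonicpolynomial}, the representation on $H_n^\C$ has highest weight $2n$, with highest weight vector $v=(x_1+\mathrm{i}x_2)^n$ satisfying $\pi(h)v=2n\,v$ and $\pi(e_+)v=0$. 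Using $e_-e_+v = e_-(e_+v)=0$ together with the commutator $[e_+,e_-]=h$ to write $e_+e_- + e_-e_+ = 2e_-e_+ + h$, I would apply the displayed expression to $v$ to obtain
$$
\eta^* v = -\tfrac14(2n)^2 v - \tfrac12(2n)\,v = -\big(n^2+n\big)v = -n(n+1)v.
$$
Hence $\lambda=-n(n+1)$, and by the first part this is the action of $\eta^*$ on all of $H_n$.

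The main obstacle, though it is more a bookkeeping subtlety than a genuine difficulty, is the factor of two in the highest weight: because $h=2\mathrm{i}f_0$, the highest weight of $H_n^\C$ is $2n$ rather than $n$, and an error here would change the final eigenvalue. Keeping the normalization of the triplet~\eqref{eq:triplet} consistent throughout the substitution for $\eta^*$ is the one place where care is required; everything else is a direct algebraic computation.
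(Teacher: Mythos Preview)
Your proposal is correct and follows essentially the same approach as the paper: Schur's Lemma applied to the complexified irreducible representation $H_n^\C$ gives that $\eta^*$ acts by a scalar, and then the scalar is computed by rewriting $\eta^*=-\tfrac14 h^2-\tfrac12(e_+e_-+e_-e_+)$ and evaluating on the highest weight vector $(x_1+\mathrm{i}x_2)^n$ of weight $2n$. The only cosmetic difference is that the paper simplifies $(e_+e_-+e_-e_+)p^*$ directly to $(e_+e_--e_-e_+)p^*=hp^*$ rather than via your identity $e_+e_-+e_-e_+=2e_-e_++h$, but the computation is the same.
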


We provide a proof of the lemma in Appendix-C. 

Prop.~\ref{thm:xixi} then follows from Lemmas~\ref{lem:defxizeta} and~\ref{lem:diagonaloperator}. \hfill{\qed}


\section{Analysis and Proof of Theorem~\ref{thm:mthm}}\label{sec:proof}
We establish in the section Theorem~\ref{thm:mthm}. The proof will be built upon two relevant facts, Prop.~\ref{prop:ppp} and Prop.~\ref{prop:qqq}, which will be presented and established in Subsections~\ref{ssec:piecewiseconstant} and~\ref{ssec:generationconstant}, respectively. We will prove Theorem~\ref{thm:mthm} in Subsection~\ref{ssec:proofmain}. 

\subsection{Analysis of output equivalent pairs}\label{ssec:piecewiseconstant}
Recall that two pairs $(x_\Sigma(0), \mu)$ and $(x'_\Sigma(0),\mu')$ are output equivalent if for any integrable control input $u(t)$, the two outputs $y(t)$ and $y'(t)$ with respect to the two pairs are identical with each other (See Def.~\ref{def:outputequiv}). We establish below the following result: 

\begin{proposition}\label{prop:ppp}
	Let the observation function $\phi$ of system~\eqref{eq:system} be nonzero and belong to $H_n$ for $n \ge 1$. If $(x_\Sigma(0),\mu) \sim (x'_\Sigma(0),\mu')$, then for any $p\in H_n$,
	$$
	p(x_\sigma(0))\rho(\sigma) = p(x'_\sigma(0))\rho'(\sigma), \quad \forall \sigma\in \Sigma,
	$$ 
	where $\rho$ and $\rho'$ are the density functions associated with $\mu$ and $\mu'$, respectively.  
\end{proposition}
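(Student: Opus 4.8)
The plan is to convert output equivalence into a countable family of moment identities over $\Sigma$, one attached to each word in $\mathcal{A}$, and then to run a Stone--Weierstrass argument in which the elements $\xi$ and $\zeta$ of Prop.~\ref{thm:xixi} supply exactly the moments needed to separate the points of $\Sigma$. The first and technically heaviest step is to extract these moment identities. Because each $f_i$ is a linear (rotation) vector field, under a constant control $v=(v_1,v_2)$ the time-$s$ evolution of system-$\sigma$ satisfies $\phi(x_\sigma(s))=\bigl(\exp(s\,\pi(M_\sigma))\phi\bigr)(x_\sigma(0))$ with $M_\sigma=\sigma_1 f_0+\sigma_2(v_1 f_1+v_2 f_2)$. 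A single interval only produces the symmetric operators $\pi(M_\sigma)^k$, which is why I would use a \emph{piecewise-constant} control taking values $v^{(1)},\dots,v^{(k)}$ on $k$ consecutive intervals of lengths $s_1,\dots,s_k$: composing the pieces and applying the mixed derivative $\partial_{s_1}\cdots\partial_{s_k}$ at $s=0$ yields $\bigl(\pi(M^{(1)}_\sigma)\cdots\pi(M^{(k)}_\sigma)\phi\bigr)(x_\sigma(0))$ (in a fixed order). Expanding the product and reading off the coefficient of the control monomial $\prod_{l:\,i_l\ne 0} v^{(l)}_{i_l}$ isolates the single word $f_\alpha$, $\alpha=i_1\cdots i_k$, together with the scalar factor $\sigma_1^{\kappa_1(\alpha)}\sigma_2^{\kappa_2(\alpha)}$; the crucial combinatorial point is that this word-to-monomial assignment is injective, since the monomial records, interval by interval, whether $i_l$ equals $1$, $2$, or $0$. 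Differentiating the equality $\int_\Sigma\phi(x_\sigma(t))\,d\mu=\int_\Sigma\phi(x'_\sigma(t))\,d\mu'$ and matching coefficients then gives, for every word $\alpha\in\mathcal{A}$,
$$\int_\Sigma \sigma_1^{\kappa_1(\alpha)}\sigma_2^{\kappa_2(\alpha)}\,(f_\alpha\phi)(x_\sigma(0))\,\rho(\sigma)\,d\sigma=\int_\Sigma \sigma_1^{\kappa_1(\alpha)}\sigma_2^{\kappa_2(\alpha)}\,(f_\alpha\phi)(x'_\sigma(0))\,\rho'(\sigma)\,d\sigma,$$
and by linearity the same identity holds with $f_\alpha$ replaced by any $\eta\in T(\g)$ for which $\kappa(\eta)$ is well defined, the attached weight being $\sigma_1^{\kappa_1(\eta)}\sigma_2^{\kappa_2(\eta)}$.

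For $p\in H_n$ set $G_p(\sigma):=p(x_\sigma(0))\rho(\sigma)-p(x'_\sigma(0))\rho'(\sigma)$, which is continuous on $\Sigma$ and linear in $p$. Since $\pi$ is irreducible on $H_n$ (Lemma~\ref{prop:harmonicpolynomial} with $k=0$) and $\phi\ne 0$, the invariant subspace $\span\{\,f_\beta\phi:\beta\in\mathcal{A}\,\}$ equals $H_n$; hence it suffices to show $G_q\equiv 0$ for each fixed $q:=f_\beta\phi$, since linearity then yields $G_p\equiv0$ for all $p\in H_n$. I would fix $\beta$, write $(c_1,c_2):=\kappa(\beta)$, and apply the identity above to the operator $\xi^{\,i}\zeta^{\,j}f_\beta$. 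Its weight is well defined and equals $(c_1+i,\;c_2+2i+4j)$, because $\kappa(\xi)=(1,2)$ and $\kappa(\zeta)=(0,4)$ are additive under concatenation; and since $q\in H_n$ and $\xi,\zeta$ act on $H_n$ as the single nonzero scalar $\lambda$ (Prop.~\ref{thm:xixi}), one has $(\xi^{\,i}\zeta^{\,j}f_\beta)\phi=\lambda^{\,i+j}q$. Dividing by $\lambda^{\,i+j}\ne 0$ produces the family
$$\int_\Sigma \sigma_1^{\,c_1+i}\,\sigma_2^{\,c_2+2i+4j}\,G_q(\sigma)\,d\sigma=0,\qquad \forall\, i,j\ge 0.$$

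To finish, put $W:=\sigma_1^{c_1}\sigma_2^{c_2}G_q$, $P:=\sigma_1\sigma_2^2$, and $Q:=\sigma_2^4$, so the last display reads $\int_\Sigma W\,P^{\,i}Q^{\,j}\,d\sigma=0$ for all $i,j\ge 0$. Here the hypotheses $\kappa_1(\xi)>0$ and $\kappa_1(\zeta)=0$ do exactly the work required: because $\sigma_2\ge a_2>0$, the value $Q=\sigma_2^4$ recovers $\sigma_2$ and then $P=\sigma_1\sigma_2^2$ recovers $\sigma_1$, so $(\sigma_1,\sigma_2)\mapsto(P,Q)$ is injective on $\Sigma$. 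Thus the unital subalgebra generated by $P$ and $Q$ separates points of $\Sigma$ and, by Stone--Weierstrass, is dense in $C(\Sigma)$; consequently $\int_\Sigma W g\,d\sigma=0$ for every $g\in C(\Sigma)$, and taking $g=W$ forces $W\equiv 0$. Since $\sigma_2^{c_2}>0$ this gives $\sigma_1^{c_1}G_q\equiv 0$, and continuity of $G_q$ together with the density of $\{\sigma_1\ne 0\}$ in $\Sigma$ (or the fact that $0\notin[a_1,b_1]$) yields $G_q\equiv 0$, completing the argument.

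The main obstacle is the first step: rigorously passing from ``equal outputs for all controls'' to ``equal word-by-word moments.'' One must justify differentiating the integrated output under the integral sign and, above all, confirm that matching coefficients of the piecewise-constant control monomials legitimately isolates each individual $f_\alpha$ with its attached weight $\sigma_1^{\kappa_1(\alpha)}\sigma_2^{\kappa_2(\alpha)}$ (this is where the injectivity of the word-to-monomial map is essential, and where one sees why a single constant control, giving only the symmetric powers $\pi(M_\sigma)^k$, is insufficient). Once this extraction is in place, Steps~2--4 are essentially formal consequences of Prop.~\ref{thm:xixi} and of the point-separation afforded by $P=\sigma_1\sigma_2^2$ and $Q=\sigma_2^4$ under the standing assumption $a_2>0$.
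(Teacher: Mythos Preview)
Your proposal is correct and follows essentially the same route as the paper: extract word-by-word moment identities via piecewise-constant controls (the paper's Lemma~\ref{lem:outputequivalent}), use irreducibility of $\pi$ on $H_n$ to reach every $p\in H_n$ as some $f_\beta\phi$, apply powers of the Casimir variants $\xi,\zeta$ to manufacture the monomials $m_\xi=\sigma_1\sigma_2^2$ and $m_\zeta=\sigma_2^4$, and conclude by Stone--Weierstrass and continuity. Your version is slightly more explicit in two places---you use the mixed products $\xi^i\zeta^j$ directly (the paper writes only $\xi^N$ and $\zeta^N$ in~\eqref{eq:innerproduct} but then invokes the full algebra), and you spell out the density argument needed to remove the possibly-vanishing factor $\sigma_1^{c_1}$---but these are cosmetic differences, not a different strategy.
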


To establish the proposition, we need to have a few preliminary results. To proceed, we recall that for an element $f_\alpha\in T(\g)$, we have that $$\kappa(f_\alpha) = (\kappa_1(f_\alpha), \kappa_2(f_\alpha))$$ where $\kappa_1(f_\alpha)$ (resp. $\kappa_2(f_\alpha)$) counts the number of ``$0$'' (resp. ``$1$'' and ``$2$'') in the word $\alpha$ over the alphabet $\mathcal{A} = \{0,1,2\}$. For convenience, we introduce the following notation: 
$$
\sigma^{\kappa(f_\alpha)} := \sigma_1^{\kappa_1(f_\alpha)} \sigma_2^{\kappa_2(f_\alpha)}, \quad \forall  \sigma \in \Sigma,   
$$  
which is a monomial in variables $\sigma_1$ and $\sigma_2$. 
We first have the following fact: 

\begin{lemma}\label{lem:outputequivalent}
Let $\phi$ be any smooth observation function.  
If $(x_\Sigma(0), \mu)\sim (x'_\Sigma(0), \mu')$, then 	for any~$f_\alpha$ with $\alpha\in \mathcal{A}$, 
\begin{equation}\label{eq:skimmy}
	\int_\Sigma \sigma^{\kappa(f_\alpha)}  (f_\alpha \phi)(x_\sigma(0)) d\mu = \int_\Sigma  \sigma^{\kappa(f_\alpha)} (f_\alpha\phi) (x'_\sigma(0)) d\mu'.  
\end{equation}
\end{lemma}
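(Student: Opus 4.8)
My plan is to probe the output-equivalence relation with a family of piecewise-constant controls whose switching times serve as differentiation parameters. Taking iterated time-derivatives at the initial instant converts the integrated measurements into the iterated directional derivatives $f_\alpha\phi$, while the freedom in choosing the constant control value on each separate interval is what lets me isolate a single word $\alpha$ at a time. Concretely, I would fix a word length $k$, choose constant control vectors $w^{(1)},\dots,w^{(k)}\in\R^2$ (one per interval), and for small nonnegative $s=(s_1,\dots,s_k)$ let $u_{s,w}$ be the control equal to $w^{(l)}$ on the $l$-th subinterval of $[0,s_1+\cdots+s_k]$. Writing $Y_l:=\sigma_1 f_0+\sigma_2(w^{(l)}_1 f_1+w^{(l)}_2 f_2)$ for the (time-constant, $\sigma$-dependent) vector field active on interval $l$, the endpoint state of system-$\sigma$ is the composition of flows $e^{s_kY_k}\cdots e^{s_1Y_1}x_\sigma(0)$, and likewise with $x'_\sigma(0)$ for the primed system. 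By Def.~\ref{def:outputequiv}, output equivalence forces the two smooth functions
$$\Phi(s):=\int_\Sigma\phi\big(e^{s_kY_k}\cdots e^{s_1Y_1}x_\sigma(0)\big)\,d\mu,\qquad \Phi'(s):=\int_\Sigma\phi\big(e^{s_kY_k}\cdots e^{s_1Y_1}x'_\sigma(0)\big)\,d\mu'$$
to coincide on a neighborhood of $0$ within the nonnegative orthant, which is enough to equate all their partial derivatives at $s=0$.

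The next step is to compute the mixed partial $\partial^k/\partial s_1\cdots\partial s_k$ at $s=0$. A short induction on $k$ using $\tfrac{d}{d\tau}\psi(e^{\tau Y}z)=(Y\psi)(e^{\tau Y}z)$ gives, pointwise in $\sigma$,
$$\frac{\partial^k}{\partial s_1\cdots\partial s_k}\Big|_{s=0}\phi\big(e^{s_kY_k}\cdots e^{s_1Y_1}x_\sigma(0)\big)=(Y_1Y_2\cdots Y_k\,\phi)(x_\sigma(0)).$$
Expanding the product $Y_1\cdots Y_k$ over words $\alpha=i_1\cdots i_k\in\mathcal{A}$ and pulling out the constant scalar coefficients (the $l$-th factor contributes $\sigma_1$, $\sigma_2 w^{(l)}_1$, or $\sigma_2 w^{(l)}_2$ according to $i_l=0,1,2$) yields
$$(Y_1\cdots Y_k\,\phi)(x_\sigma(0))=\sum_{|\alpha|=k}\sigma^{\kappa(f_\alpha)}\Big(\prod_{l:\,i_l=1}w^{(l)}_1\prod_{l:\,i_l=2}w^{(l)}_2\Big)(f_\alpha\phi)(x_\sigma(0)).$$
Because $\Sigma$ is compact, $\mu$ is finite, and the relevant $s$-derivatives are jointly continuous in $(\sigma,s)$ (the flow depends continuously on time, parameters, and the initial data $x_\sigma(0)$, which is continuous in $\sigma$), they are bounded, hence dominated, so differentiation under the integral sign is legitimate. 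Integrating over $\Sigma$ and equating the $s=0$ derivatives of $\Phi$ and $\Phi'$ therefore produces a polynomial identity in the $2k$ real variables $\{w^{(l)}_1,w^{(l)}_2\}$ whose coefficient attached to $\alpha$ is $\int_\Sigma\sigma^{\kappa(f_\alpha)}(f_\alpha\phi)(x_\sigma(0))\,d\mu$ on one side and $\int_\Sigma\sigma^{\kappa(f_\alpha)}(f_\alpha\phi)(x'_\sigma(0))\,d\mu'$ on the other.

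The final step, which I regard as the crux, is to separate distinct words. The monomial $\prod_{l:i_l=1}w^{(l)}_1\prod_{l:i_l=2}w^{(l)}_2$ is squarefree and multilinear, and the assignment $\alpha\mapsto$ this monomial is injective: from the monomial one reads off, position by position, whether $i_l$ is $0$ (neither variable present), $1$ ($w^{(l)}_1$ present), or $2$ ($w^{(l)}_2$ present). Since a real polynomial vanishing on all of $\R^{2k}$ is the zero polynomial, equality of the two polynomials for all $w$ forces equality coefficient by coefficient; by injectivity each coefficient involves exactly one word, giving
$$\int_\Sigma\sigma^{\kappa(f_\alpha)}(f_\alpha\phi)(x_\sigma(0))\,d\mu=\int_\Sigma\sigma^{\kappa(f_\alpha)}(f_\alpha\phi)(x'_\sigma(0))\,d\mu'$$
for every word of length $k$. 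Letting $k$ range over all nonnegative integers (the case $k=0$ being just $y(0)=y'(0)$) establishes \eqref{eq:skimmy}. The main obstacle is precisely this isolation of a single word: a single-interval constant control only produces derivatives of $(\sigma_1 f_0+\sigma_2 u_1 f_1+\sigma_2 u_2 f_2)^k$ and hence cannot distinguish words with the same letter-counts but different orderings. It is the use of independent control values on separate time intervals that resolves the time-ordering, and verifying the injectivity of $\alpha\mapsto$ monomial is what makes this work.
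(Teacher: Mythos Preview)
Your proposal is correct and follows essentially the same approach as the paper: probe with piecewise-constant controls, differentiate with respect to the interval durations at $s=0$ to obtain $\int_\Sigma(Y_1\cdots Y_k\phi)\,d\mu=\int_\Sigma(Y_1\cdots Y_k\phi)\,d\mu'$, and then vary the control values on each interval to isolate individual words. The paper's proof is terser---it simply asserts that expanding each $\tilde f_i$ yields the desired equality for every $\alpha$---whereas you spell out the injectivity of the map $\alpha\mapsto\prod_{l:i_l=1}w^{(l)}_1\prod_{l:i_l=2}w^{(l)}_2$ and justify differentiation under the integral, but the underlying argument is the same.
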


\begin{proof}{Proof.}
Let $n$ be an arbitrary nonnegative integer number. We prove the lemma for any word $\alpha$ of length~$n$. The arguments used in the proof will be similar to the one used in~\cite{chen2019structure}: We will appeal to the class of piecewise constant control inputs to establish~\eqref{eq:skimmy}. 
    
	Define a piecewise constant control input $u(t)$ as follows: First, let $0 < t_1 < \cdots < t_n$ be switching times. Then, we let $u(t):= (u_{i_1}, u_{i_2})$ for $t\in [t_{i-1}, t_i)$ where $t_0 := 0$. Next, for ease of notation, we define for each $i = 1,\ldots, n$, the duration $\tau_i:= t_{i} - t_{i - 1}$ and the corresponding vector field over the period $[t_{i - 1}, t_i)$:	
	\begin{equation}\label{eq:deftildef}
	\tilde f_i :=  \sigma_1 f_0 + \sigma_2 ( u_{i_1}f_1 + u_{i_2} f_2).   
	\end{equation}
	
	We further introduce the following notation: For an arbitrary differential equation $\dot x(t) = f(x(t))$, we let $e^{tf}x(0)$ be the solution of the equation at time~$t$ with $x(0)$ the initial state. In the context here, we have that for any individual system-$\sigma$ with $\sigma \in \Sigma$, the following hold with respect to the piecewise constant control input:
	$$
	\left\{
	\begin{array}{l}
	x_\sigma(t_n) = e^{\tau_n \tilde f_n}\cdots e^{\tau_1 \tilde f_1} x_\sigma(0), \vspace{3pt}\\ 
	x'_\sigma(t_n) = e^{\tau_n \tilde f_n}\cdots e^{\tau_1 \tilde f_1} x'_\sigma(0).
	\end{array} 
	\right. 
	$$
	Thus, if $(x_\Sigma(0), \mu) \sim (x'_\Sigma(0), \mu')$, then for any $\tau_i$ with $i = 1,\ldots, n$, the following holds: 
	\begin{equation*}
	\int_\Sigma \phi\left (e^{\tau_n \tilde f_n}\cdots e^{\tau_1 \tilde f_1} x_\sigma(0)\right )d\mu  
	= \int_\Sigma \phi\left (e^{\tau_n \tilde f_n}\cdots e^{\tau_1 \tilde f_1} x'_\sigma(0)\right )d\mu'.
	\end{equation*}   
	
	We next take partial derivative $\nicefrac{\partial^n}{\partial \tau_1\cdots \partial \tau_n}$ on both sides of the above expression and let them be evaluated at $\tau_1 = \cdots = \tau_n = 0$. Then, by computation, we obtain
	$$
	\int_\Sigma (\tilde f_1 \cdots \tilde f_n \phi)(x_\sigma(0)) d\mu = \int_\Sigma (\tilde f_1 \cdots \tilde f_n \phi)(x'_\sigma(0)) d\mu'. 
	$$
	Note that by~\eqref{eq:deftildef}, each $\tilde f_i$ depends on $(u_{i_1}, u_{i_2})$ and the above expression holds for all $(u_{i_1}, u_{i_2})\in \R^2$ and for all $i= 1,\ldots, n$. Also, note that by expanding each $\tilde f_i$ using~\eqref{eq:deftildef}, we have that $\tilde f_1 \cdots \tilde f_n \phi$ is a linear combination of $\sigma^{\kappa(f_\alpha)} f_\alpha \phi$ for~$\alpha$ any word of length~$n$. 
	It then follows that~\eqref{eq:skimmy} holds. 
\end{proof}

A set of functions $\{\psi_i\}^n_{i = 1}$ on $\Sigma$ is said to {\em separate points} if for any two distinct points $\sigma$ and $\sigma'$ in $\Sigma$, there exists a function $\psi_i$ out of the set such that $\psi_i(\sigma) \neq \psi_i(\sigma')$.  
We also recall that by Lemma~\ref{lem:defxizeta}, $$\kappa(\xi) = (1,2) \quad \mbox{and} \quad  \kappa(\zeta) = (0,4).$$  
We define monomials $m_\xi$ and $m_\zeta$ in variables $\sigma_1$ and $\sigma_2$ as follows:
\begin{equation}\label{eq:pxipzeta}
\left\{
\begin{array}{l}
m_\xi(\sigma):= \sigma^{\kappa(\xi)} = \sigma_1\sigma^2_2, \vspace{3pt} \\
m_\zeta(\sigma) := \sigma^{\kappa(\zeta)} = \sigma^4_2.
\end{array}
\right.
\end{equation} 
We next have the following fact:

\begin{lemma}\label{lem:sssssss}
The set $\{m_\xi, m_\zeta\}$ separates points and, moreover, $m_\zeta$ is everywhere nonzero.  
\end{lemma}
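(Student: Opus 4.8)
The plan is to exploit the standing modeling assumption $0 < a_2 < b_2$, which forces the second coordinate $\sigma_2$ to stay strictly positive on all of $\Sigma$. Recalling from~\eqref{eq:pxipzeta} that $m_\zeta(\sigma) = \sigma_2^4$, and that $\sigma_2 \ge a_2 > 0$ for every $\sigma = (\sigma_1,\sigma_2) \in \Sigma$, we immediately obtain $m_\zeta(\sigma) = \sigma_2^4 \ge a_2^4 > 0$. This establishes at once the second assertion, namely that $m_\zeta$ is everywhere nonzero.

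For the separating-points claim I would take two distinct points $\sigma = (\sigma_1,\sigma_2)$ and $\sigma' = (\sigma_1',\sigma_2')$ in $\Sigma$ and split the argument into two cases according to whether their second coordinates coincide. If $\sigma_2 \ne \sigma_2'$, then since both values lie in $[a_2,b_2] \subset (0,\infty)$ and the map $t \mapsto t^4$ is strictly increasing on $(0,\infty)$, we get $m_\zeta(\sigma) = \sigma_2^4 \ne (\sigma_2')^4 = m_\zeta(\sigma')$, so $m_\zeta$ already distinguishes the two points. If instead $\sigma_2 = \sigma_2'$, then distinctness forces $\sigma_1 \ne \sigma_1'$; using $m_\xi(\sigma) = \sigma_1\sigma_2^2$ together with $\sigma_2^2 \ge a_2^2 > 0$, the difference $m_\xi(\sigma) - m_\xi(\sigma') = (\sigma_1 - \sigma_1')\,\sigma_2^2$ is nonzero, so $m_\xi$ distinguishes them. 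Combining the two cases, every distinct pair is separated by at least one of $m_\xi$ and $m_\zeta$, which is exactly the point-separating property.

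There is no genuine obstacle in this lemma; the whole argument hinges on the hypothesis $a_2 > 0$, which simultaneously supplies the nonvanishing of $\sigma_2$ and the strict monotonicity of its even powers. I would simply flag that this is precisely where the assumption on the rf-inhomogeneity parameter $\sigma_2$ enters, and note in passing that if $a_2 = 0$ were permitted then $m_\zeta$ could vanish and the pair would fail to separate points sharing $\sigma_2 = 0$, so the positivity of $a_2$ is essential rather than cosmetic.
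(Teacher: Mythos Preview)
Your proof is correct and follows essentially the same two-case argument as the paper: use $a_2>0$ to get $m_\zeta>0$ everywhere, then separate distinct points by $m_\zeta$ when $\sigma_2\neq\sigma_2'$ and by $m_\xi$ when $\sigma_2=\sigma_2'$. The only differences are cosmetic---you spell out the monotonicity of $t\mapsto t^4$ and the difference $m_\xi(\sigma)-m_\xi(\sigma')$ explicitly, and add the (apt) remark on the necessity of $a_2>0$.
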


\begin{proof}{Proof.}
First, we recall that $\Sigma = [a_1, b_1]\times [a_2, b_2]$ with $0 < a_2 < b_2$. Thus, for any $\sigma = (\sigma_1,\sigma_2)\in \Sigma$, we have that $\sigma_2\in [a_2, b_2]$ and, hence, $m_\zeta$ is everywhere nonzero. 
	Next, we let $\sigma = (\sigma_1, \sigma_2)$ and $\sigma' = (\sigma'_1, \sigma'_2)$ be two distinct points in $\Sigma$. If $\sigma_2 \neq \sigma'_2$, then $m_\zeta(\sigma) \neq m_\zeta(\sigma')$. If $\sigma_2 = \sigma'_2$, then $\sigma_1 \neq \sigma'_1$ and, hence, $m_\xi(\sigma) \neq m_\xi(\sigma')$. 
\end{proof}

With the above lemmas at hand, we prove Prop.~\ref{prop:ppp}:

\begin{proof}{Proof of Prop.~\ref{prop:ppp}.}  
  Recall that $U(\g)$ is the universal enveloping algebra associated with~$\g$. 
Let $p$ be any nonzero polynomial in $H_n$ and 
$$H'_n:= U(\g) p = \{\eta p \mid \eta\in U(\g) \}.$$  
Let $\pi: \g\times H_n \to H_n$  be the representation defined in Sec.~\ref{sec:proofofxixi}, i.e.,  $$\pi: (f, p)\in \g\times H_n \mapsto \pi(f)p:= fp.$$     
Because $H_n$ is closed under $\pi(\g)$, $H'_n$ is a subspace of $H_n$. Also, note that by the definition, $H'_n$ itself is closed under $\pi(\g)$. Thus, by the fact that $\pi$ is an irreducible representation (Lemma~\ref{prop:harmonicpolynomial}), we must have that $H_n = H'_n = U(\g) p$. Since $U(\g)$ is spanned by $f_{\alpha}$ for $\alpha\in \mathcal{A}$ and $\dim H_n= 2n + 1$, there exist $f_{\alpha_i}$, for $i = 1,\ldots, 2n + 1$, such that $f_{\alpha_i} p$ form a basis of $H_n$. For convenience, we let 
$$
p_i := f_{\alpha_i} p, \quad \forall i =1,\ldots, 2n + 1.
$$ 

Let $(x'_\Sigma(0), \mu') \sim (x_\Sigma(0), \mu)$ be two output equivalent profiles. Let $\rho$ and $\rho'$ be the density functions associated with $\mu$ and $\mu'$, respectively. 
By Lemma~\ref{lem:outputequivalent}, we have that for any $i = 1,\ldots, 2n + 1$ and any word $\alpha$ over the alphabet $\{0,1,2\}$, the following holds:  
$$
\int_\Sigma \sigma^{\kappa(f_\alpha) + \kappa(f_{\alpha_i})} (f_\alpha p_i)(x_\sigma (0)) d\mu  = \int_\Sigma \sigma^{\kappa(f_\alpha) + \kappa(f_{\alpha_i})} (f_\alpha p_i)(x'_\sigma (0)) d\mu'. 
$$
The above equality can be further strengthened by replacing $f_\alpha$ with any $\eta \in T(\g)$ such that $\kappa(\eta)$ is well defined, i.e.,   
\begin{equation}\label{eq:kuanrong}
\int_\Sigma \sigma^{\kappa(\eta) + \kappa(f_{\alpha_i})} (\eta p_i)(x_\sigma (0)) d\mu  = \int_\Sigma \sigma^{\kappa(\eta) + \kappa(f_{\alpha_i})} (\eta p_i)(x'_\sigma (0)) d\mu'.
\end{equation}

Now, let $\xi$ and $\zeta$ be defined in Lemma~\ref{lem:defxizeta}. Then, by~\eqref{eq:xizetaeigen} and Lemma~\ref{lem:diagonaloperator},  we have that for any~$N \ge 0$ and $i = 1,\ldots, 2n + 1$,  
\begin{equation}\label{eq:NNNNN}
\xi^N p_i = \zeta^N p_i = \lambda^N p_i,   
\end{equation}
with $\lambda := -n(n + 1)$. 
Thus, by replacing $\eta$ in~\eqref{eq:kuanrong} with $\xi^N$ or $\zeta^N$ and by omitting $\lambda^N$ on both sides, we obtain the following equalities:  
\begin{equation}\label{eq:innerproduct}
\left\{
\begin{array}{l}
\displaystyle \int_\Sigma m^N_\xi(\sigma) \psi_i(\sigma) d\sigma = \displaystyle \int_\Sigma m^N_\xi(\sigma)  \psi'_i(\sigma) d\sigma, \vspace{3pt}\\
\displaystyle \int_\Sigma m^N_\zeta(\sigma) \psi_i(\sigma) d\sigma = \displaystyle \int_\Sigma m^N_\zeta(\sigma) \psi'_i(\sigma) d\sigma,
\end{array}  
\right. 
\end{equation}
where $m_\xi$, $m_\zeta$ are monomials given by~\eqref{eq:pxipzeta} and $\psi_i$, $\psi'_i$ are defined as follows: 
$$
\left\{
\begin{array}{l}
\psi_i(\sigma):= \sigma^{\kappa(f_{\alpha_i})}p_i(x_\sigma(0))\rho(\sigma), \vspace{3pt} \\
\psi'_i(\sigma):= \sigma^{\kappa(f_{\alpha_i})}p_i(x'_\sigma(0))\rho'(\sigma),
\end{array}
\right. 
$$
for all $i = 1,\ldots, 2n + 1$.

Let ${\rm C}^0(\Sigma)$ be the space of continuous functions on $\Sigma$ and 
${\rm L}^2(\Sigma)$ be the space of square integrable functions $\psi$ on~$\Sigma$, i.e., 
$\int_\Sigma \|\psi\|^2d\sigma < \infty$. Note that ${\rm L}^2(\Sigma)$ is an inner-product space: For any $\psi$ and $\psi'$ in ${\rm L}^2(\Sigma)$, we let their inner-product be defined as follows: 
$$
\langle \psi, \psi'\rangle_{{\rm L}^2} := \int_\Sigma \psi(\sigma)\psi'(\sigma) d\sigma. 
$$ 
By Lemma~\ref{lem:sssssss}, the set $\{m_\xi, m_\zeta\}$ separates points and, moreover, $m_\zeta$ is everywhere nonzero on $\Sigma$. Thus, by the Stone-Weierstrass Theorem (see, for example,~\cite{rudin1976principles}), the algebra generated by $m_\xi$ and $m_\zeta$ is dense in ${\rm C}^0(\Sigma)$. Furthermore, since $\Sigma$ is compact, ${\rm C}^0(\Sigma)$ is dense in ${\rm L}^2(\Sigma)$. 
It then follows from~\eqref{eq:innerproduct} that $\psi_i(\sigma) = \psi'_i(\sigma)$ for almost all $\sigma\in \Sigma$. Since $\psi_i$ and $\psi'_i$ are continuous on~$\Sigma$, the two functions are identical:  
$$
\sigma^{\kappa(f_{\alpha_i})}p_i(x_\sigma(0))\rho(\sigma) = \sigma^{\kappa(f_{\alpha_i})}p_i(x'_\sigma(0))\rho'(\sigma), \quad \forall \sigma\in \Sigma.
$$ 
Furthermore, by continuity of $p_i$ and $\rho$, we obtain that
$$
p_i(x_\sigma(0))\rho(\sigma) = p_i(x'_\sigma(0))\rho'(\sigma), \quad \forall \sigma\in \Sigma.
$$ 
Note that the above holds for all $i = 1,\ldots, 2n+1$. Since $\{p_i\}^{2n + 1}_{i = 1}$ is a basis of $H_n$, we conclude that Prop.~\ref{prop:ppp} holds. 
\end{proof}

\begin{remark}{\em 
	Note that the two items of Prop.~\ref{thm:xixi} are instrumental in establishing Prop.~\ref{prop:ppp}: Item~(1) of Prop.~\ref{thm:xixi} guarantees that Lemma~\ref{lem:sssssss} is satisfied while item~(2) of Prop.~\ref{thm:xixi} guarantees that~\eqref{eq:innerproduct} holds.  
	}
\end{remark}

\subsection{Constant function in quadratic form}\label{ssec:generationconstant}

If there were a harmonic homogeneous polynomial $p^*$ of positive degree such that $p^*$ is a nonzero constant function over the entire $S^2$, then by Prop.~\ref{prop:ppp}, we obtain that $p^*\rho = p^*\rho'$ and, hence, $\rho = \rho'$ (i.e., $\mu = \mu'$).

However, such harmonic homogeneous polynomial $p^*$ does not exist. Nevertheless, we show in the section that there is a quadratic form in $p\in H_n$ (for any $n\ge 1$) which is exactly a nonzero constant function on $S^2$. We make the statement precise below.  

We first recall that for a given set of functions $\Phi:= \{\phi_i\}^l_{i = 1}$ on~$S^2$, we use $\mathcal{S}(\Phi)$ to denote the algebra generated by the set~$\Phi$, i.e., it comprises all linear combinations of finitely many monomials $\phi^{n_1}_1\cdots \phi^{n_l}_l$. Also, recall that $\cS_2(\Phi)$ is the space of quadratic forms in $\phi_i$ for $i = 1,\ldots, l$, i.e., $\cS_2(\Phi)$ is spanned by $\phi_i\phi_j$ for $1\le i \le j \le l$.

We now let $\Phi = \{p_i\}^{2n + 1}_{i = 1}$ be an arbitrary basis of $H_n$.  
Note that each $q\in \cS_2(\Phi)$ is a quadratic form in $p_i$ and each $p_i$ is a homogeneous polynomial of degree~$n$ in $x_1$, $x_2$, and $x_3$. Thus, each $q\in \cS_2(\Phi)$ is a homogeneous polynomial of degree~$2n$ in $x_1$, $x_2$, and $x_3$.

We further let~${\bf 1}_{S^2}$ be the constant function that takes value~$1$ everywhere on $S^2$, i.e., $${\bf 1}_{S^2}(x) := 1, \quad \forall x\in S^2.$$
We establish below the following result:  

\begin{proposition}\label{prop:qqq}
For any basis $\Phi$ of $H_n$, $\cS_{2}(\Phi)$ contains $\|x\|^{2n}$ and, hence, the constant function ${\bf 1}_{S^2}$. 	
\end{proposition}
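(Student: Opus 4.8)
The plan is to first observe that the space $\cS_2(\Phi)$ does not actually depend on the choice of basis $\Phi$ of $H_n$, and then to exhibit, for one conveniently chosen basis, a single quadratic form that equals $\|x\|^{2n}$ on all of $\R^3$. For the basis-independence, if $\Phi = \{p_i\}$ and $\Phi' = \{p_i'\}$ are two bases related by $p_i' = \sum_j A_{ij}p_j$ with $A$ invertible, then each product $p_i'p_k' = \sum_{j,l}A_{ij}A_{kl}\,p_jp_l$ lies in the span of the $\{p_jp_l\}$, so $\cS_2(\Phi')\subseteq \cS_2(\Phi)$; applying the same argument to $A^{-1}$ gives equality. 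Hence it suffices to produce the desired quadratic form for any one basis.

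I would then take $\{Y_i\}_{i=1}^{2n+1}$ to be a basis of $H_n$ orthonormal with respect to the $\mathrm{L}^2(S^2)$ inner product $\langle f,g\rangle := \int_{S^2} fg\,d\omega$ (a real orthonormal basis exists since $H_n$ and the inner product are real), and consider $Q := \sum_i Y_i^2 \in \cS_2(\{Y_i\})$. The key claim is that $Q$ is a nonzero constant on $S^2$. Invariance is the heart of the matter: the group $\SO(3)$, whose Lie algebra is $\g\approx\so(3)$, preserves $H_n$ (it commutes with $\Laplace$ and preserves degree) as well as the rotation-invariant inner product, so for each $R\in\SO(3)$ the substitution $Y_i\mapsto Y_i\circ R$ is given by an orthogonal matrix $O=(O_{ij})$ with $Y_i\circ R = \sum_j O_{ij}Y_j$. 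Expanding $\sum_i (Y_i\circ R)^2 = \sum_{j,k}(O^\top O)_{jk}Y_jY_k = \sum_j Y_j^2$ shows $Q\circ R = Q$ for all $R$. Since $\SO(3)$ acts transitively on $S^2$, $Q$ is constant on $S^2$, say $Q\equiv c$, and integrating gives $c\cdot\mathrm{area}(S^2) = \int_{S^2}Q\,d\omega = \sum_i\|Y_i\|^2 = 2n+1$, whence $c = (2n+1)/4\pi \neq 0$.

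Finally, because $Q$ is homogeneous of degree $2n$, for every $x\neq 0$ we have $Q(x) = \|x\|^{2n}\,Q(x/\|x\|) = c\,\|x\|^{2n}$, and this polynomial identity extends to all of $\R^3$. Thus $\|x\|^{2n} = c^{-1}\sum_i Y_i^2 \in \cS_2(\{Y_i\}) = \cS_2(\Phi)$, and restricting to $S^2$ gives $\mathbf{1}_{S^2} = \|x\|^{2n}|_{S^2}\in\cS_2(\Phi)$, which is the assertion. The main obstacle is precisely the invariance-plus-nonvanishing of $Q$; this is the content of the addition theorem for spherical harmonics, and the orthogonality-of-the-representation computation above is exactly what forces it. Everything else (basis-independence and homogenization) is routine.

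As an alternative staying entirely within the paper's Lie-algebraic framework, one can instead note that $\cS_2(\Phi)$ is a $\g^\C$-invariant subspace of $\BP^\C_{2n}$ (since $f(p_ip_j) = (fp_i)p_j + p_i(fp_j)$ keeps us in the span of products), invoke the multiplicity-free decomposition $\BP^\C_{2n} = \bigoplus_{k=0}^{n}\|x\|^{2k}\BH^\C_{2n-2k}$ whose only trivial summand is $\C\|x\|^{2n}$, and detect that the element $(x_1^2+x_2^2)^n = (x_1+\mathrm{i}x_2)^n(x_1-\mathrm{i}x_2)^n\in\cS_2(\Phi^\C)$ has nonzero component in that trivial summand via $\int_{S^2}(1-x_3^2)^n\,d\omega > 0$; since a subrepresentation of a multiplicity-free representation is a sum of isotypic summands, the trivial summand must then lie in $\cS_2(\Phi^\C)$, and taking real parts recovers $\|x\|^{2n}\in\cS_2(\Phi)$.
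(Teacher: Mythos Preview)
Your proposal is correct, and both routes you sketch are valid.

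Your first argument---take an $L^2(S^2)$-orthonormal basis $\{Y_i\}$ and observe that $Q=\sum_i Y_i^2$ is $\SO(3)$-invariant because the representation is orthogonal---is essentially a self-contained proof of the Addition Theorem for spherical harmonics, which the paper cites (without proof) as an alternative approach in Appendix~F. Your version is cleaner than the paper's Appendix in that it avoids the explicit Legendre machinery and extracts the constant $c=(2n+1)/4\pi$ directly by integration.

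Your second, Lie-algebraic alternative shares its setup with the paper's main proof: both establish that $\cS_2^\C(\Phi)\subseteq \BP_{2n}^\C$ is $\g^\C$-invariant and then use the decomposition $\BP_{2n}^\C=\bigoplus_k \|x\|^{2k}\BH_{2n-2k}^\C$. The difference is in how the trivial summand is detected. The paper constructs an explicit weight-zero vector
\[
q^* \;=\; \sum_{k=0}^{2n}(-1)^{n+k}\,p_k\,p_{2n-k},
\]
where $p_k=\pi^k(e_-)(x_1+\mathrm{i}x_2)^n$, and checks by hand that $\tilde\pi(h)q^*=\tilde\pi(e_+)q^*=0$ and that $q^*>0$ (hence $q^*=c\|x\|^{2n}$ for some $c>0$). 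You instead observe that $(x_1^2+x_2^2)^n=(x_1+\mathrm{i}x_2)^n(x_1-\mathrm{i}x_2)^n\in\cS_2^\C(\Phi)$ has nonzero $L^2$-projection onto constants, and then invoke multiplicity-freeness to conclude that the trivial summand sits inside $\cS_2^\C(\Phi)$. Your argument is non-constructive but shorter; the paper's gives the invariant polynomial explicitly, which may be useful downstream.
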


\begin{remark}\label{rmk:sames2}
{\em We note that for any two bases $\Phi = \{\phi_k\}^{2n + 1}_{k = 1}$ and $\Phi' = \{\phi'_k\}^{2n + 1}_{k = 1}$ of $H_n$,  $$\cS_2(\Phi) = \cS_2(\Phi').$$ 
This holds because $\mathcal{S}_2(\Phi)$ and $\mathcal{S}_2(\Phi')$ are spanned by $\phi_i\phi_j$ and $\phi'_i\phi'_j$, respectively, each $\phi_i\phi_j$ (resp. $\phi'_i\phi'_j$) can be expressed as a linear combination of $\phi'_i\phi'_j$ (resp. $\phi_i\phi_j$). 
More specifically, since $\Phi'$ is a basis of $H_n$ and $\phi_i, \phi_j\in H_n$, there are real coefficients $c_{i,k}$ and $c_{j,k}$, for $k = 1,\ldots, 2n + 1$ such that $\phi_i = \sum_{k = 1}^{2n + 1} c_{i,k} \phi'_k$ and $\phi_j = \sum_{k = 1}^{2n + 1} c_{j,k}\phi'_i$. It then follows that
$$
\phi_i \phi_j = \sum_{1\le k, k'\le 2n + 1}c_{i,k}c_{j,k'} \phi'_i\phi'_j.
$$
By the same argument, we can express $\phi'_i\phi'_j$ as a certain linear combination of $\phi_i\phi_j$ as well. Thus, by the above arguments, we only need to prove Prop.~\ref{prop:qqq} for a particular basis~$\Phi$ of $H_n$. We will make a choice of $\Phi$ later in~\eqref{eq:choicefphi}.}
\end{remark}

 Before proving Prop.~\ref{prop:qqq}, we take an example for illustration of the statement:

\begin{example}\label{exmp:identity}
We demonstrate Prop.~\ref{prop:qqq} for $n = 1, 2, 3$: 
\begin{enumerate}
\item[(1)] If $n = 1$, then $H_1$ is spanned by $\{x_1,x_2,x_3\}$, so $\mathcal{S}_2(\Phi)$ contains $\|x\|^2$. 
\item[(2)] If $n = 2$, then a basis of $H_2$ is given by
$$
p_1:= x^2_1 - x^2_2, \,\,\,\,\  p_2:= x^2_2 - x^2_3,  \,\,\,\,\
p_3:= x_1x_2,\,\,\,\,\ p_4:= x_1x_3,\,\,\,\,\ p_5:= x_2x_3.
$$ 
By computation, we obtain that
$$
\|x\|^4 = p_1^2 + p_2^2 + p_1p_2 
+ 2 \left (p^2_3 + p^2_4 + p^2_5 \right ). 
$$
\item[(3)] If $n = 3$, then a basis of $H_3$ is given by
$$
\begin{array}{l}
p_1:= x_1(2x^2_1 - 3x^2_2 - 3x^2_3), \vspace{3pt} \\
p_2:= x_2(2x^2_2 - 3x^2_1 - 3x^2_3), \vspace{3pt} \\ 
 p_3:=x_3(2x^2_3 - 3x^2_1 - 3x^2_2), \vspace{3pt} \\
 p_4:=x_1(x^2_2 - x^2_3), \,\,\,\,\ p_5:=x_2(x^2_1 - x^2_3), \vspace{3pt} \\
 p_6:= x_3(x^2_1 - x^2_2), \,\,\,\,\ p_7:= x_1x_2x_3.
 \end{array}
$$
By computation, we obtain that
$$
\|x\|^6 = \frac{1}{4}\left ( p^2_1 + p^2_2 + p^2_3 \right ) + \frac{15}{4}\left (p^2_4 + p^2_5 + p^2_6 \right ) + 15p^2_7.
$$
\end{enumerate}
\end{example}

We establish below Prop.~\ref{prop:qqq}. 
There are several different approaches for proving the result. The approach we present below utilizes again the representation theory of $\sl(2,\C)$. One can also use the Addition Theorem for spherical harmonics~\cite[Ch.~12]{arfken2005mathematical} to prove the result. For that, we refer the reader to Appendix-F for detail.   

To proceed, we first recall that by Lemma~\ref{prop:harmonicpolynomial}, the polynomial $(x_1+ \mathrm{i} x_2)^{n}$ is a highest weight vector (with the highest weight being~$2n$) associated with the irreducible representation $\pi: \g^\C \times H^\C_n \to H^\C_n$. Let $h$, $e_+$, and $e_-$ be defined in~\eqref{eq:triplet}. We next define 
\begin{equation}\label{eq:defphikkk}
p_{k}(x) := \pi^k(e_-) (x_1+ \mathrm{i} x_2)^{n}, \quad \forall k = 0,\ldots, 2n. 
\end{equation}
Then, by Lemma~\ref{lem:irredrep}, each $p_k$ is a weight vector and 
\begin{equation}\label{eq:relationship1}
\pi(h) p_k = (2n - 2k) p_k.
\end{equation}  
 It should be clear from the definition that $\pi(e_-)p_k = p_{k + 1}$ for all $k = 0,\ldots, 2n-1$. Conversely, for any $k = 1,\ldots, 2n$, the following holds (see, for example,~\cite[Ch.~17]{hall2013quantum}): 
\begin{equation}\label{eq:reverse}
\pi(e_+) p_{k} = k(2n - k + 1) p_{k-1}.  
\end{equation}
Furthermore, we have the following fact:

\begin{lemma}\label{lem:niu}
For any $k=0,\ldots, n$,  
\begin{equation}\label{eq:hshshs}
p_{2n - k} = (-1)^{n - k} \frac{(2n - k)!}{k!} \bar p_k, 
\end{equation}	
where $\bar p_k$ is the complex conjugate of $p_k$. 
\end{lemma}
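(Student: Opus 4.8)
The plan is to exploit the anti-linear symmetry furnished by complex conjugation, together with the one-dimensionality of the weight spaces guaranteed by Lemma~\ref{lem:irredrep}, so as to reduce the identity to a single scalar recurrence in $k$ whose only free parameter is an overall phase.

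First I would record how conjugation interacts with the triplet. Since $f_1$ and $f_2$ act on polynomials as differential operators with real coefficients, for every $\phi$ one has $\overline{\pi(f_i)\phi}=\pi(f_i)\bar\phi$ for $i=0,1,2$. Reading off the definitions $e_\pm=\pm f_1+\mathrm{i}f_2$ and $h=2\mathrm{i}f_0$ in~\eqref{eq:triplet}, this yields the conjugation rules
$$\overline{\pi(e_+)\phi}=-\pi(e_-)\bar\phi,\quad \overline{\pi(e_-)\phi}=-\pi(e_+)\bar\phi,\quad \overline{\pi(h)\phi}=-\pi(h)\bar\phi.$$
Applying the last rule to~\eqref{eq:relationship1}, $\bar p_k$ is a weight vector of weight $-(2n-2k)$, which is exactly the weight of $p_{2n-k}$. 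As both vectors are nonzero and the weight spaces of the irreducible representation on $H^\C_n$ are one-dimensional (Lemma~\ref{lem:irredrep}), there is a scalar $\lambda_k$ with $p_{2n-k}=\lambda_k\bar p_k$; the lemma amounts to showing $\lambda_k=(-1)^{n-k}(2n-k)!/k!$.

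Next I would derive a recurrence for $\lambda_k$ by applying $\pi(e_+)$ to $p_{2n-k}=\lambda_k\bar p_k$. On the left, \eqref{eq:reverse} with index $2n-k$ gives $\pi(e_+)p_{2n-k}=(2n-k)(k+1)p_{2n-k-1}=(2n-k)(k+1)\lambda_{k+1}\bar p_{k+1}$. On the right, the conjugation rules together with $\pi(e_-)p_k=p_{k+1}$ give $\pi(e_+)\bar p_k=-\bar p_{k+1}$, so the right-hand side equals $-\lambda_k\bar p_{k+1}$. Comparing coefficients yields
$$\lambda_{k+1}=\frac{-\lambda_k}{(2n-k)(k+1)},$$
whose solution is $\lambda_k=\lambda_0\,(-1)^k\,(2n-k)!/\bigl((2n)!\,k!\bigr)$, leaving only the constant $\lambda_0$ undetermined. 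To pin it down I would use the single base case $k=n$, i.e. $\lambda_n=1$, for which it suffices to compute one coefficient of $p_n=\pi^n(e_-)(x_1+\mathrm{i}x_2)^n$. Writing $z=x_1+\mathrm{i}x_2$ and $\bar z=x_1-\mathrm{i}x_2$, the operator is $\pi(e_-)=-2x_3\partial_z+\bar z\,\partial_{x_3}$; only the term $-2x_3\partial_z$ lowers the $z$-degree while never producing a factor of $\bar z$, so the unique contribution to the pure monomial $x_3^n$ arises from applying it $n$ times, with coefficient $(-2)^n n!$, which is real. Since the $x_3^n$-coefficient of $\bar p_n$ is the conjugate of this, again $(-2)^n n!$, the relation $p_n=\lambda_n\bar p_n$ forces $\lambda_n=1$; substituting $k=n$ into the solved recurrence gives $\lambda_0=(-1)^n(2n)!$ and hence $\lambda_k=(-1)^{n-k}(2n-k)!/k!$, which is precisely~\eqref{eq:hshshs}.

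I expect the base case to be the only genuinely delicate point. The proportionality $p_{2n-k}=\lambda_k\bar p_k$ and the recurrence for $\lambda_k$ are forced by representation theory, but the overall phase must be extracted by an honest, if short, computation; the bookkeeping of which monomials survive under $\pi^n(e_-)$ is what makes the normalization rigorous. As a consistency check one can instead apply $\pi^{2n}(e_+)$ to $p_{2n}=\lambda_0\bar p_0$ and invoke~\eqref{eq:reverse} to obtain $|\lambda_0|=(2n)!$, but the sign $(-1)^n$ still requires the coefficient computation above.
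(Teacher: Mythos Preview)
Your proposal is correct and follows essentially the same strategy as the paper's proof: both use one-dimensionality of the weight spaces to get a proportionality constant, derive a two-term recurrence for it by applying a raising/lowering operator (you use $\pi(e_+)$ going $k\mapsto k+1$, the paper uses $\bar e_+=-e_-$ going $k\mapsto k-1$, which is the same recurrence read backwards), and fix the remaining phase by computing that the coefficient of $x_3^n$ in $p_n$ is $(-2)^n n!$, hence real. Your write-up is a bit more explicit about the conjugation rules and the $z,\bar z$ bookkeeping, but there is no substantive difference in approach.
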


We provide a proof in Appendix-D. 
Note, in particular, that by~\eqref{eq:hshshs}, $p_n = \bar p_n$ and, hence, $p_n$ is real.

With the $p_k$ define in~\eqref{eq:defphikkk}, we now let  
\begin{equation}\label{eq:choicefphi}
\Phi:= \{p_k\}^{2n}_{k = 0}.
\end{equation} 
By Lemma~\ref{lem:highestweight}, $\Phi$ is a basis of $H^\C_n$ over $\C$. 
Let $\mathcal{S}^\C_2(\Phi)$ be the complexification of $\cS_2(\Phi)$, i.e., $\mathcal{S}^\C_2(\Phi)$ is the space of all quadratic forms in $p_k$ with complex coefficients. 
To establish Prop.~\ref{prop:qqq}, it now suffices to show that the monomial $\|x\|^{2n}$ is contained in 
$\mathcal{S}^\C_2(\Phi)$ (note that if this is the case, then $\|x\|^{2n}$ is contained in $\mathcal{S}_2(\Phi)$ as well).

Note that $\cS^\C_2(\Phi)$ is a subspace of $P^\C_{2n}$ and is 
spanned by $p_{i} p_{j}$ for $0\le i \le j \le 2n$. 
Let $\tilde \pi$ be the representation of $\g^{\C}$ on $P^\C_{2n}$, i.e., 
$$
\tilde \pi: (f, \tilde \phi)\in \g^\C \times P^\C_{2n} \mapsto \tilde \pi(f) \tilde \phi:=f \tilde \phi \in P^\C_{2n}. 
$$
We have the following fact: 

\begin{lemma}\label{lem:invariancecs2}
The subspace $\cS^\C_2(\Phi)$ is invariant under $\tilde \pi(\g)$.  	
\end{lemma}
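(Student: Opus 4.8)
The plan is to exploit the fact that $\tilde\pi(f)$ acts as a directional derivative, and hence as a derivation obeying the Leibniz rule. Since $\cS^\C_2(\Phi)$ is spanned by the products $p_ip_j$ with $0\le i\le j\le 2n$, and since $\tilde\pi(f)$ is linear, it suffices to show that $\tilde\pi(f)(p_ip_j)\in\cS^\C_2(\Phi)$ for each such pair and each $f\in\g$.

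The key step is the product rule. Because $\tilde\pi(f)q=fq$ is the directional derivative of the function $q$ along the vector field $f$, and $p_ip_j$ is the pointwise product of the two functions $p_i,p_j$ on $S^2$, I would write
$$
\tilde\pi(f)(p_ip_j)=f(p_ip_j)=(fp_i)\,p_j+p_i\,(fp_j).
$$
Here $fp_i$ and $fp_j$ are computed as directional derivatives along the same field $f$, so they coincide with $\pi(f)p_i$ and $\pi(f)p_j$ under the representation $\pi$ of $\g$ on $H^\C_n$; this compatibility is immediate from the definitions in Sec.~\ref{sec:prelim}.

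Next I would invoke invariance of $H_n$ under $\pi(\g)$, which holds by Lemma~\ref{prop:harmonicpolynomial} in the case $k=0$: both $\pi(f)p_i$ and $\pi(f)p_j$ lie in $H^\C_n$. Since $\Phi=\{p_k\}^{2n}_{k=0}$ is a basis of $H^\C_n$, I can expand $\pi(f)p_i=\sum_k a_{ik}p_k$ and $\pi(f)p_j=\sum_k b_{jk}p_k$ for suitable complex coefficients $a_{ik},b_{jk}$. Substituting into the product rule yields
$$
\tilde\pi(f)(p_ip_j)=\sum_{k=0}^{2n}a_{ik}\,p_kp_j+\sum_{k=0}^{2n}b_{jk}\,p_ip_k,
$$
which is a linear combination of the generators of $\cS^\C_2(\Phi)$ and therefore lies in $\cS^\C_2(\Phi)$. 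This establishes the claimed invariance.

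I do not expect a genuine obstacle here: the argument is essentially the observation that a first-order differential operator preserves the span of pairwise products of functions drawn from a space it stabilizes. The only points requiring a word of care are verifying the Leibniz rule for the directional derivative as defined in Sec.~\ref{sec:prelim}, and confirming that $fp$ computed in $\tilde\pi$ on $P^\C_{2n}$ agrees with $fp$ computed in $\pi$ on $H^\C_n$ — both being the same directional derivative — so that the expansion in the basis $\Phi$ is legitimate.
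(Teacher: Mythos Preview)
Your proposal is correct and follows essentially the same approach as the paper: reduce to the spanning set $p_ip_j$, apply the Leibniz rule $\tilde\pi(f)(p_ip_j)=(fp_i)p_j+p_i(fp_j)$, and invoke the invariance of $H^\C_n$ under $\pi(\g)$ to conclude that each term lies in $\cS^\C_2(\Phi)$. The paper's proof is slightly terser, omitting the explicit basis expansion you wrote out, but the argument is identical.
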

   
\begin{proof}{Proof.}
Because $\cS^\C_2(\Phi)$ is spanned by $p_{i}p_j$, for $0\le i \le j \le 2n$, 
	it suffices to show that for any such $p_ip_j$ and for any $f\in \g^\C$,  $\tilde \pi(f)(p_ip_j)$ belongs to $\cS^\C_2(\Phi)$. 
But, this directly follows from	the Leibniz rule, 
$$
\tilde \pi(f) (p_ip_j) = (fp_i)p_j + p_i(fp_j). 
$$
Note that both $fp_i$ and $fp_j$ belong to $H^\C_n$ because $H^\C_n$ is invariant under $\pi(\g)$. Thus, the right hand side of the above expression belongs to $\cS^\C_2(\Phi)$.   
\end{proof}

By Lemma~\ref{lem:invariancecs2}, one can obtain a representation of $\g^\C$ on $\cS^\C_2(\Phi)$ by restricting $\tilde \pi$ to $\g^\C\times \cS^\C_2(\Phi)$. With slight abuse of notation, we will still use $\tilde \pi$ to denote such a representation.  
The representation $\tilde \pi$ is, in general, not irreducible. But, by Lemma~\ref{prop:harmonicpolynomial}, we know that there exist a positive integer $N$ and nonnegative integers $0\le k_1 < \cdots < k_N \le n$ such that 
$$
\cS^\C_2(\Phi) = \|x\|^{2k_1} H^\C_{2n - 2k_1} \oplus \cdots \oplus \|x\|^{2k_N} H^\C_{2n - 2k_N}.
$$
Moreover, $\tilde \pi$ is an irreducible representation when restricted to every subspace $\|x\|^{2k_i} H^\C_{2n - 2k_i}$ for $i = 1,\ldots, N$.

Note, in particular, that if $k_N = n$, then $H^\C_0 = \C$ and, hence,  $\cS^\C_2(\Phi)$ contains the desired polynomial $\|x\|^{2n}$. We show below that this is indeed the case:

\begin{proof}{Proof of Prop.~\ref{prop:qqq}.}
Consider the following element in $\cS^\C_2(\Phi)$: 
$$q^*:= \sum^{2n}_{k = 0} (-1)^{n + k} p_k p_{2n - k}.$$ We show below that 
$q^* =  c\|x\|^{2n}$ 
for some $c > 0$. 
First, note that by~\eqref{eq:hshshs}, $q^*$ can be re-written as follows:
$$
q^* =  q^2_n + 2\sum^{n-1}_{k = 0} \frac{(2n - k)!}{k!} |p_k|^2. 
$$
Note that $q_n$ is real, so $q^*$ is strictly positive.   

We next show that both $\tilde \pi(h)q^*$ and $\tilde \pi(e_+)q^*$ are zero. 
For $\tilde \pi(h)q^*$, we have that 
	$$
	\tilde \pi(h)q^* =  \sum^{2n}_{k = 0} (-1)^{n + k}
	\tilde \pi(h)(p_k p_{2n - k}) = \sum^{2n}_{k = 0} (-1)^{n + k}\left ( (h p_k) p_{2n - k} + p_k (hp_{2n - k}) \right ) = 0,  
	$$
	where the last equality follows from~\eqref{eq:relationship1}. 
 
For $\tilde \pi(e_+)q^*$, we use the fact that $e_+ p_0 = 0$ (because $p_0$ is a highest weight vector) and obtain that  
\begin{equation}\label{eq:asdasd}
\tilde \pi(e_+)q^* = \sum^{2n - 1}_{k = 0} (-1)^{n + k} (p_k (e_+ p_{2n - k})  - (e_+ p_{k+1}) p_{2n - k - 1} )
\end{equation}
It follows from~\eqref{eq:reverse} that
$$
\left\{
\begin{array}{lll}
	e_+p_{2n - k} & = & (2n - k)(k + 1) p_{2n - k -1}, \\ 
	e_+p_{k + 1} & = & (k + 1)(2n - k) p_{n - 2k}, 
\end{array}
\right. 
$$ 
and, hence, each addend on the right hand side of~\eqref{eq:asdasd} is~$0$. 

We now let $\C q^*$ be the one-dimensional subspace of $\cS^\C_2(\Phi)$ spanned by~$q^*$.  Because both $\tilde \pi(h)q^*$ and $\tilde \pi(e_+)q^*$ are zero,  
we obtain by Lemma~\ref{lem:highestweight} that $\tilde \pi$ is an irreducible representation when restricted to $\g^\C \times \C q^*$. Moreover, its highest weight of the representation is~$0$. Thus, by Lemma~\ref{prop:harmonicpolynomial},  
$$\C q^* = \|x\|^{2n} H^\C_0.$$ 
Since $q^*$ is positive, we conclude that $q^* = c \|x\|^{2n}$ for some positive constant $c$.  
\end{proof}

\subsection{Proof of Theorem~\ref{thm:mthm}}\label{ssec:proofmain}
In the section, we prove Theorem~\ref{thm:mthm}. Besides the results established in the previous subsections, we also need the following fact:

\begin{lemma}\label{lem:hhh}
For two points~$x$ and~$x'$ in~$S^2$, if $p(x) = p(x')$ for all $p\in H_n$ with $n$ positive, then $x' \in \{x, (-1)^{n-1} x\}$.  	
\end{lemma}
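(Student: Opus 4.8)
The plan is to produce, for the given point $x'$, a single harmonic polynomial whose restriction to $S^2$ is maximized precisely at $x'$ (and, when $n$ is even, also at $-x'$), and then to feed the hypothesis $p(x)=p(x')$ into it. The natural candidate is the \emph{zonal harmonic} centered at $x'$. Recall that if $\ell_n$ denotes the Legendre polynomial of degree $n$, then for every fixed $a\in S^2$ the solid function $y\mapsto \|y\|^n\,\ell_n\!\left(\langle a,y\rangle/\|y\|\right)$ is a harmonic homogeneous polynomial of degree $n$: for $a=(0,0,1)$ this is the classical solid zonal harmonic $\|y\|^n \ell_n(y_3/\|y\|)$, and the general case follows because $H_n$ is invariant under $\SO(3)$ (orthogonal changes of variable commute with $\Laplace$ and preserve degree), while every $a\in S^2$ is a rotation of $(0,0,1)$. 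Denote by $Z_{x'}\in H_n$ the restriction to $S^2$ of this polynomial with $a=x'$, so that $Z_{x'}(y)=\ell_n(\langle x',y\rangle)$ for $y\in S^2$.

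First I would evaluate the hypothesis at $p=Z_{x'}$. Since $\langle x',x'\rangle=1$ and $\ell_n(1)=1$, this gives
\[
\ell_n(\langle x,x'\rangle)=Z_{x'}(x)=Z_{x'}(x')=\ell_n(1)=1 .
\]
Next I would invoke the extremal property of Legendre polynomials: on $[-1,1]$ one has $|\ell_n(t)|\le 1$, with $\ell_n(t)=1$ if and only if $t=1$, or $t=-1$ with $n$ even (here $\ell_n(-1)=(-1)^n$). Applying this to $t=\langle x,x'\rangle\in[-1,1]$ forces $\langle x,x'\rangle=1$, or, when $n$ is even, possibly $\langle x,x'\rangle=-1$. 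Since $x,x'$ are unit vectors, $\langle x,x'\rangle=1$ gives $x'=x$ and $\langle x,x'\rangle=-1$ gives $x'=-x$. A single zonal harmonic already suffices here because the constraint on the inner product is rigid on the sphere. Matching parities ($(-1)^{n-1}=1$ for $n$ odd and $=-1$ for $n$ even) yields exactly $x'\in\{x,(-1)^{n-1}x\}$, as claimed.

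The only nonroutine ingredient — and hence the step I expect to be the main obstacle — is the \emph{strict} extremal property of $\ell_n$, namely that its maximum modulus $1$ on $[-1,1]$ is attained only at the endpoints. This can be extracted from the Laplace integral representation $\ell_n(\cos\theta)=\tfrac1\pi\int_0^\pi(\cos\theta+\mathrm{i}\sin\theta\cos\varphi)^n\,d\varphi$, which yields $|\ell_n(\cos\theta)|\le \tfrac1\pi\int_0^\pi(\cos^2\theta+\sin^2\theta\cos^2\varphi)^{n/2}\,d\varphi<1$ whenever $\sin\theta\neq0$; alternatively it is a standard classical fact about Legendre polynomials. I would also record a fully equivalent route that meshes with Appendix-F: by the Addition Theorem the reproducing kernel of $H_n$ on $S^2$ is a positive multiple of $\ell_n(\langle x,x'\rangle)$, so the equality of evaluation functionals $\mathrm{ev}_x=\mathrm{ev}_{x'}$ on $H_n$ forces $\ell_n(\langle x,x'\rangle)=\ell_n(1)$ directly, after which the same extremal property finishes the argument.
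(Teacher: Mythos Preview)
Your argument is correct and genuinely different from the paper's own proof in Appendix~E. The paper proceeds by coordinates: assuming without loss of generality $x_3\neq 0$, it evaluates the three explicit harmonic polynomials $p_1=(x_1+\mathrm{i}x_2)^n$, $p_2=x_3(x_1+\mathrm{i}x_2)^{n-1}$, $p_3=(x_3+\mathrm{i}x_1)^n$, reads off the ratios $x'_1/x'_3$ and $x'_2/x'_3$ from $p_1/p_2$, and then uses the unit-sphere constraint plus the parity of $p_1$ to pin down $x'$ up to the stated ambiguity. Your route instead deploys a single real polynomial, the zonal harmonic $Z_{x'}(y)=\ell_n(\langle x',y\rangle)$, and reduces everything to the strict extremal property $|\ell_n|\le 1$ on $[-1,1]$ with equality only at the endpoints.

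What each buys: your proof is coordinate-free, avoids the case split $x_1^2+x_2^2=0$ versus $\neq 0$, and dovetails nicely with the Addition Theorem already used in Appendix~F (your reproducing-kernel remark makes this explicit). The paper's proof is more self-contained in that it needs nothing about Legendre polynomials---just elementary algebra with three concrete elements of $H_n^\C$---whereas your argument imports the strict inequality $|\ell_n(t)|<1$ for $t\in(-1,1)$, which, as you note, is the one nonroutine input (your Laplace-integral justification is fine).
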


A proof of the lemma is provided in Appendix-E. 
We are now in a position to prove Theorem~\ref{thm:mthm}:

\begin{proof}{Proof of Theorem~\ref{thm:mthm}.}
Let $(x_\Sigma(0),\mu)$ be an arbitrary pair and $(x'_\Sigma(0),\mu')$ be any pair that is output equivalent to $(x_\Sigma(0),\mu)$. We show below that $(x'_\Sigma(0),\mu')$ is either $(x_\Sigma(0),\mu)$ or $((-1)^{n-1}x_\Sigma(0), \mu)$.

Let $\Phi:= \{p_i\}^{2n + 1}_{i = 1}$ be an arbitrary basis of $H_n$.  Then, by Prop.~\ref{prop:ppp}, we obtain that for any $\sigma\in \Sigma$ and any $i = 1,\ldots, 2n + 1$, 
\begin{equation}\label{eq:item1}
p_i(x_\sigma(0))\rho(\sigma) = p_i(x'_\sigma(0))\rho'(\sigma).  
\end{equation}
Next, by Prop.~\ref{prop:qqq}, there exists a quadratic form~$q$ in~$p_i$ such that the following holds:   
$$q(x) = \sum_{1\le i \le j \le 2n + 1} c_{ij} p_i(x)p_j(x) = 1, \quad \forall x\in S^2.$$ 
It then follows from~\eqref{eq:item1} that for all $\sigma\in \Sigma$,  
$$
\rho^2(\sigma) = \rho^2(\sigma) q(x_\sigma(0))= \rho'^2(\sigma) q(x'_\sigma(0)) =  \rho'^2(\sigma).   
$$
Because the two density functions $\rho$ and $\rho'$ are nonnegative everywhere, we obtain that $$\rho(\sigma) = \rho'(\sigma), \quad \forall \sigma\in \Sigma.$$

Furthermore, it follows from~\eqref{eq:item1} that for any $\sigma\in \Sigma$ and any $i = 1,\ldots, 2n + 1$, 
\begin{equation*}
p_i(x_\sigma(0)) = p_i(x'_\sigma(0)). 
\end{equation*}
Because $\{p_i\}^{2n + 1}_{i = 1}$ form a basis of $H_n$, we have that $p(x_\sigma(0)) = p(x'_\sigma(0))$ for all $p\in H_n$ and for all $\sigma\in \Sigma$.  
Thus, by Lemma~\ref{lem:hhh}, we obtain that
\begin{equation}\label{eq:inno}
x_\sigma(0) \in \{ x_\sigma(0), (-1)^{n - 1} x_\sigma(0) \}, \quad \forall \sigma\in \Sigma.
\end{equation}
Note, in particular, that if $n$ is odd, then $x'_\sigma(0) = x_\sigma(0)$  for all $\sigma\in \Sigma$. Thus, in this case, system~\eqref{eq:system} is ensemble observable. We now assume that~$n$ is even and show that $x'_\Sigma(0)$ is either $x_\Sigma(0)$ or $-x_\Sigma(0)$. But, this follows from the fact that both $x_\sigma(0)$ and $x'_\sigma(0)$ are continuous in~$\sigma$. To see this, consider a map $\delta: \Sigma\to \R$ defined by sending $\sigma$ to the Euclidean distance between $x_\sigma(0)$ and $x'_\sigma(0)$, i.e.,  
$$
\delta: \sigma \mapsto \delta(\sigma):=\|x_\sigma(0) - x'_\sigma(0)\|.
$$     
Because $x_\sigma(0)$ and $x'_\sigma(0)$ are continuous in $\sigma$, the map~$\delta$ is continuous as well. On the other hand, we note that by~\eqref{eq:inno}, there are only two cases: 
\begin{enumerate}
\item[\em (1)] If $x'_\sigma(0) = x_\sigma(0)$, then $\delta(\sigma) = 0$. 
\item[\em (2)] If $x'_\sigma(0) = -x_\sigma(0)$, then $\delta(\sigma) = 2$. 
\end{enumerate}
Thus, if $x'_\sigma(0) = x_\sigma(0)$ (resp. $x'_\sigma(0) = -x_\sigma(0)$) for a certain $\sigma\in \Sigma$, then by continuity of~$\delta$, 
$x'_\Sigma(0) = x_\Sigma(0)$ (resp. $x'_\Sigma(0) = -x_\Sigma(0)$). This completes the proof.     
\end{proof}

\section{Conclusions}\label{sec:conclusion}
We have addressed in the paper the problem about observability of a continuum ensemble of Bloch equations~\eqref{eq:system}. We assume that the initial states $x_\sigma(0)$ of the individual systems are unknown and, moreover, the measure~$\mu$ that describes the overall population density of the individual systems is also unknown. The problem is about whether one is able to estimate $x_\sigma(0)$ for every $\sigma\in \Sigma$ and the measure~$\mu$ using only a scalar measurement output~$y(t)$.

We have provided a class of observation functions $\phi$ that guarantee (weak) ensemble observability of the resulting system~\eqref{eq:system}. Specifically, we have shown that if $\phi$ is a harmonic homogeneous polynomial of positive degree, then two pairs $(x_\Sigma(0), \mu)$ and $(x'_\Sigma(0), \mu')$ are output equivalent if and only if $\mu = \mu'$ and $x'_\Sigma(0) \in \{x_\Sigma(0), (-1)^{n - 1} x_\Sigma(0)\}$. 
 
The proof of the result relies on the use of representation theory of~$\sl(2,\C)$. In particular, the following two items are key to establishing the result:  
\begin{enumerate} 
	\item[\em (1)] We have introduced the Casimir element $\eta^*$ (and its variants $\xi$ and $\zeta$ defined in Lemma~\ref{lem:defxizeta}) which acts  on  the space of harmonic homogeneous polynomials as a scalar multiple of the identity operator. This fact is key to establishing Prop.~\ref{prop:ppp}. 
	\item[\em (2)] We have used the fact that any finite-dimensional representation of $\sl(2,\C)$ is reducible and, then, decomposed the space of quadratic forms $\cS^\C_2(\Phi)$ (with $\Phi$ a basis of $H^\C_n$) into a direct sum of invariant subspaces under the representation. In particular, we have shown that $\cS_2(\Phi)$ contains the one-dimensional subspace spanned by $\|x\|^{2n}$, which is the constant function~$\mathbf{1}_{S^2}$ on $S^2$. This fact is key to establishing Prop.~\ref{prop:qqq}.     
\end{enumerate}
 The approach developed in the paper can be extended to analyze observability of other ensemble systems defined on Lie groups and their homogenous spaces. The above two items could serve as guidelines for the extension.

\bibliographystyle{IEEEtran}    
\bibliography{estimationbloch.bib}

\appendix

\section{Proof of Lemma~\ref{lem:casimir}}
It suffices to show that $\eta^*$ commutes with every~$f_i$ for $i = 0,1,2$.
	Recall that if $(i,j,k)$ is a cyclic rotation of $(0,1,2)$, then $[f_i,f_j] = f_k$. Thus, by symmetry, we only need to show that~$\eta^*$ commutes with~$f_0$. First, note that
	\begin{multline*}
	f_0f^2_1 = f_1^2f_0  + (f_0f_1 - f_1f_0)f_1 + f_1(f_0f_1 -f_1f_0) \\
	= f_1^2f_0  + [f_0,f_1]f_1 + f_1[f_0,f_1] = f^2_1f_0 + f_2f_1 + f_1f_2.
	\end{multline*}
	Similarly, we obtain that
	\begin{multline*}
	f_0f^2_2 = f_2^2f_0  + (f_0f_2 - f_2f_0)f_2 + f_2(f_0f_2 -f_2f_0) \\
	= f_2^2f_0  + [f_0,f_2]f_2 + f_2[f_0,f_2] = f^2_2f_0 - f_1f_2 - f_2f_1.
	\end{multline*}
	It then follows that $f_0$ commutes with $(f^2_1 + f^2_2)$ and, hence, with~$\eta^*= \sum^2_{i = 0} f^2_i$ as well. 
\hfill{\qed}

\section{Proof of Lemma~\ref{prop:harmonicpolynomial}}

   Let $h$, $e_+$, $e_-$ be defined in~\eqref{eq:triplet}. Then, by computation, we obtain that
     $$
     \pi(h) p^*_k = 2(n - 2k) p^*_k  \quad \mbox{and} \quad \pi(e_+) p^*_k = 0. 
     $$
     Let $V_k$ be a subspace of~$P_n^\C$ spanned by $\pi^l(e_-)p^*_k$ for $l = 0, \ldots, 2(2n-k)$. Then, by Lemmas~\ref{lem:irredrep} and~\ref{lem:highestweight}, 
     it suffices to show that $V_k = \|x\|^{2k}\BH^\C_{n - 2k}$. 
     
     First, note that the dimension of $H^\C_{n - 2k}$ is $2(n - 2k) + 1$, which is the same as the dimension of the subspace~$V_k$. Thus, we only need to show that each $\pi^{l}(e_-)p^*_k$, for $l = 0,\ldots, 2(n - 2k)$, belongs to $\|x\|^{2k}H^\C_{n - 2k}$.  
     
     Next, note that $\|x\|^2 = 1$ for all $x\in S^2$. Thus, for any $i = 1,2,3$,  $f_i \|x\|^2 = 0$ and, hence, $\pi(e_-) \|x\|^2 = 0$. It follows that for any $l = 0,\ldots, 2(n - 2k)$,  
     $$\pi^{l}(e_-)p^*_k  = \|x\|^{2k} \pi^{l}(e_-)(x_1 + \mathrm{i} x_2)^{n - 2k}.     
     	$$      
     
     It now remains to show that each $\pi^{l}(e_-)(x_1 + \mathrm{i} x_2)^{n - 2k}$, for $l = 0,\ldots, 2(n - 2k)$, belongs to $H^\C_{n - 2k}$.  
    To see this, note that the Laplace operator $\Laplace$ commutes with every $f_i$, i.e., $\Laplace f_i  = f_i \Laplace$ for all $i = 1,2,3$. In particular, it commute with $\pi(e_-)$. Thus, 
     $$
     \Laplace \pi^{l}(e_-) (x_1 + \mathrm{i} x_2)^{n - 2k} = \pi^{l}(e_-) \Laplace (x_1 + \mathrm{i} x_2)^{n - 2k} = 0  
     $$
     for all $l = 0,\ldots, 2(n - 2k)$. 
     \hfill{\qed}

\section{Proof of Lemma~\ref{lem:diagonaloperator}}
Because $\eta^*$ belongs to the center of $U(\g)$, 
	$\eta^* f = f \eta^*$ for all $f\in \g$. Then, by Schur's Lemma, there exists a constant $\lambda\in \C$ such that $\eta^* p = \lambda p$ for all $p\in H^\C_n$. To evaluate $\lambda$, we let $p^* := (x_1 + \mathrm{i}x_2)^n$ be a highest weight vector in $H^\C_{n}$ (with the highest weight being~$2n$). Next, let $h$, $e_+$, and $e_-$ be defined in~\eqref{eq:triplet}. Note that
	$$
	\eta^* = \sum^2_{i = 0}f^2_i = -\frac{1}{4}h^2 - \frac{1}{2}(e_+e_- + e_-e_+).
	$$ 
	Then, using the fact that $e_+ p^* = 0$, we obtain that
	$$
	\eta^*p^* = -\frac{1}{4}h^2 p^* - \frac{1}{2}(e_+e_- + e_-e_+) p^*  
	= -\frac{1}{4}h^2 p^*  - \frac{1}{2}(e_+e_- - e_-e_+)p^*. 
	$$
	Further, note that $[e_+, e_-] = h$ and $h p^* = 2n p^*$. Thus, 
	$$
	\eta^*p^* = -\left (\frac{1}{4}h^2 + \frac{1}{2} h\right ) p^* = -n(n + 1) p^*, 
	$$     
	which implies that $\lambda = -n(n + 1)$.   	    
\hfill{\qed}

\section{Proof of Lemma~\ref{lem:niu}}
We first show that for each $k = 0,\ldots, n$, there exists a complex number $c_k$ such that 
$p_{2n - k} = c_k \bar p_k$. To see this, we note that  
$h p_k = (2n - 2k) p_k$. Taking complex conjugate on both sides, we have $\bar h \bar p_k = (2n - 2k) \bar p_k $. Recall that $h = 2\mathrm{i} f_0$, so $\bar h = - h$. Thus, 
$
h \bar p_k = -(2n - 2k) \bar p_k
$, 
so $\bar p_k$ belongs to the weight space corresponding to the weight $-(2n - 2k)$. Because the weight space is one-dimensional (over $\C$) and because $p_{2n - k}$ belongs to the same weight space, there exists a $c_k\in \C$ such that $p_{2n - k} = c_k \bar p_k$. 

Next, we note that $c_k \bar e_+ \bar p_k = \bar e_+ p_{2n -k}$. 
Recall that by~\eqref{eq:reverse}, $e_+ p_k = k(2n - k + 1) p_{k - 1}$, so $\bar e_+ \bar p_k = k(2n - k + 1) \bar p_{k - 1}$.  
From~\eqref{eq:triplet}, we have that $\bar e_+ = -e_-$ and, hence, $\bar e_+ p_{2n -k}= - e_- p_{2n - k} = - p_{2n - k + 1}$. It then follows that
$$
c_k k(2n - k + 1) \bar p_{k - 1}  = c_k \bar e_+ \bar p_k = \bar e_+ p_{2n -k}  = - p_{2n - k + 1} = c_{k - 1}\bar p_{k - 1},
$$
which then implies that
\begin{equation*}
c_{k - 1} = -k(2n - k + 1) c_k.  	
\end{equation*}
As a consequence, the following holds:
$$
c_k = c_n (-1)^{n-k} \frac{(2n - k)!}{k!}, \quad \forall k = 0,\ldots, n.
$$ 

To establish the lemma, it now suffices to show that $c_n = 1$. Note that $c_n$ satisfies the condition $c_n \bar p_n = p_n$. As a consequence, $c_n = 1$ if and only if $p_n$ is real. 
We write $p_n = \sum^l_{i = 1} \gamma_i m_i$, where $m_i$ are monomials in variables $x_1$, $x_2$, and $x_3$ and $\gamma_i\in \C$ are coefficients.  
Note that if there is some $i = 1,\ldots,l$, such that $\gamma_i$ is real, then all the coefficients are real. This holds because otherwise, $p_n$ and $\bar p_n$ are linearly independent which contradicts the fact that they both belong to the same weight space. With that in mind, we show below that $p_n$ contains the monomial $x^n_3$ with nonzero, real coefficient. Recall that $p_n = e^n_- p_0$ where $p_0 = (x_1 + \mathrm{i} x_2)^n$ and $e_- = -f_1 + \mathrm{i} f_2$ with $f_1$ and $f_2$ defined in~\eqref{eq:vectorfields}.  Straightforward computation shows that the coefficient of $x^n_3$ in $p_n$ is given by $n! (-2)^n $. 
\hfill{\qed}

\section{Proof of Lemma~\ref{lem:hhh}}
Recall that $H_n^\C$ is the complexification of $H_n$. 
We fix an arbitrary $x\in S^2$ and show that if $p(x') = p(x)$ for all $p \in H^\C_n$, then $x' \in \{x, (-1)^{n - 1} x\}$. Since the $x_i$'s cannot be zero simultaneously, we assume without loss of generality that $x_3 \neq 0$.  
Then, consider the following three homogeneous polynomials in $H^\C_n$:  
$$
p_1(x) := (x_1 + \mathrm{i} x_2)^n, \quad
p_2(x) :=  x_3(x_1 + \mathrm{i} x_2)^{n-1}, \quad
p_3(x):= (x_3 + \mathrm{i}  x_1)^n.
$$
We assume that the values of the above polynomials at the given~$x$ are given by $$p_1(x) = c_1, \quad p_2(x) = c_2, \quad p_3(x) = c_3,$$ for some $c_1, c_2, c_3\in \C$. We provide below solutions~$x'$ to the above polynomial equations. 

If both $x_1$ and $x_2$ are $0$, then, $c_1 = c_2 = 0$ and $c_3$ is a (nonzero) real number. It follows that $x'_1 = x'_2 = 0$ and $x'^n_3 = x^n_3 = c_3$. Thus, in this case, $x' \in \{x, (-1)^{n - 1}x\}$.  
Next, we assume that $x_1^2 + x_2^2 \neq 0$. Since $x_3\neq 0$, every $c_i$ is nonzero. Then,    
$$
\frac{p_1(x')}{p_2(x')} =  \frac{x'_1 + \mathrm{i}  x'_2}{x'_3} = \frac{c_1}{c_2}. 
$$ 		
Since $x'_1$, $x'_2$, and $x'_3$ are real, we have that
\begin{equation}\label{eq:sssss}
x'_1 = \operatorname{re}(\nicefrac{c_1}{c_2})x'_3 \quad \mbox{and} \quad x'_2 = \operatorname{im}(\nicefrac{c_1}{c_2})x'_3. 
\end{equation}
where $\operatorname{re}(\cdot)$ and $\operatorname{im}(\cdot)$ denote the real  and imaginary part of a complex number, respectively. On the other hand, we also have that $\sum^3_{i = 1} x'^2_i = 1$. Thus,~\eqref{eq:sssss} determines $x'$ up to sign, i.e., $x' = \pm x$. If, further, $n$ is odd, then $$p_1(-x) = -p_1(x) = -c_1 \neq c_1,$$ and, hence, $x'$ can only be~$x$. Combining the above arguments, we conclude that $x' \in \{x, (-1)^{n - 1}x\}$.  
\hfill{\qed}

\section{The Addition Theorem}
We provide here another proof of Prop.~\ref{prop:qqq} using the Addition Theorem for spherical harmonics (see, for example,~\cite[Ch.~12]{arfken2005mathematical}). Recall that the Cartesian coordinate system $(x_1,x_2,x_3)$ and the spherical coordinate system $(r, \theta, \varphi)$ are related by
\begin{equation}\label{eq:sphericalcoord}
x_1 = r\sin\theta \cos \varphi, \quad
x_2 = r\sin \theta\sin \varphi, \quad
x_3 = r\cos\theta.
\end{equation} 
We next recall that {\em spherical harmonics} $Y^k_n(\theta, \varphi)$ are defined as follows: For a given a nonnegative integer $n$ and an integer $k$ with $|k| \le n$, we have that
$$
Y^k_{n}(\theta, \varphi) := (-1)^k \sqrt{\frac{2n + 1}{4\pi} \frac{(n - k)!}{(n + k)!}} L^k_n(\cos\theta) e^{\mathrm{i} k\varphi},
$$  
where $L^k_{n}$ is the {\em associated Legendre polynomial} define by
$$
L_n^k(x):= \frac{(-1)^k}{2^n n!} (1 - x^2)^{k/2} \frac{d^{n + k}}{dx^{n + k}} (x^2 - 1)^n. 
$$
It is known that $\{Y^k_{n}\}^n_{k = -n}$ is a basis of $H^\C_n$ (after change of coordinates~\eqref{eq:sphericalcoord}). In other words, each harmonic homogeneous polynomial $p\in H^\C_n$ can be expressed as a linear combination of the spherical harmonics and vice versa. In fact, we note here that each $Y^k_n$ for $|k| \le n$ is linearly proportional to $p_{n - k}$ where $p_{n - k}$ is defined in~\eqref{eq:defphikkk}.

We now reproduce the Additional Theorem for spherical harmonics: First, recall that the ordinary Legendre polynomial $L_n$ can be described by the Rodrigues' formula:  
$$
L_n(x) := \frac{1}{2^n n!} \frac{d^n}{dx^n} (x^2 - 1)^n.
$$
Next, for two points $(1, \theta, \varphi)$ and $(1, \theta', \varphi')$ on the unit sphere $S^2$, we let $\gamma$ be the angle between these two points, i.e., 
$$
\cos \gamma = \cos\theta\cos\theta' + \sin\theta\sin\theta' \cos(\varphi - \varphi').
$$ 
Then, the Addition Theorem for spherical harmonics is given by the following:   

\begin{lemma}[Addition Theorem]
For any two pairs $(\theta, \varphi)$ and $(\theta', \varphi')$, we have that
$$
L_n(\cos \gamma) := \frac{4\pi}{2n + 1} \sum^n_{k = -n} Y_n^k(\theta, \varphi) \bar Y^{k}_n(\theta',\varphi'),
$$ 
where $\bar Y^{k}_n(\theta',\varphi')$ is the complex conjugate of $Y^{k}_n(\theta',\varphi')$.
\end{lemma}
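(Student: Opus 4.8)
The plan is to read the right-hand side as (a constant multiple of) the reproducing kernel of $H^\C_n$ for the $\mathrm{L}^2(S^2)$ inner product and to exploit rotational symmetry. Write $\omega=(\theta,\varphi)$ and $\omega'=(\theta',\varphi')$ for the two points of $S^2$, let $\gamma$ be the angle between them, and set
$$K(\omega,\omega') := \sum^n_{k=-n} Y^k_n(\omega)\,\bar Y^k_n(\omega').$$
Recall that with the normalization in the statement the family $\{Y^k_n\}^n_{k=-n}$ is orthonormal in $\mathrm{L}^2(S^2)$ and is a basis of $H^\C_n$. The goal is to show, first, that $K(\omega,\omega')$ depends only on $\cos\gamma$ and is, as a function of $\omega$, a spherical harmonic of degree $n$; this forces $K=c\,L_n(\cos\gamma)$, after which a single evaluation fixes $c=(2n+1)/(4\pi)$.

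First I would establish that $\omega\mapsto L_n(\cos\gamma)$ is itself a degree-$n$ spherical harmonic for each fixed $\omega'$. Rotating coordinates so that $\omega'$ sits at the north pole turns $\cos\gamma$ into $\cos\theta$, so $L_n(\cos\gamma)$ becomes the zonal harmonic $L_n(\cos\theta)$, which is, up to scale, $Y^0_n\in H^\C_n$; since the $\SO(3)$-action carries $H^\C_n$ into itself, $L_n(\cos\gamma)\in H^\C_n$ as a function of $\omega$. Next I would show that $K$ is invariant under the diagonal action, $K(R\omega,R\omega')=K(\omega,\omega')$ for all $R\in\SO(3)$. This is where the representation $\pi$ enters: because $\g\approx\so(3)$, $\pi$ integrates to a representation $U$ of $\SO(3)$ on $H^\C_n$ preserving the $\mathrm{L}^2(S^2)$ inner product (rotations preserve degree, harmonicity, and the surface measure $d\Omega=\sin\theta\,d\theta\,d\varphi$), so in the orthonormal basis $\{Y^k_n\}$ the matrices $U(R)$ are unitary; writing $Y^k_n(R\omega)=\sum_j U(R)_{jk}Y^j_n(\omega)$ and using $U(R)U(R)^*=I$ collapses the double sum back to $K(\omega,\omega')$.

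With invariance in hand I would specialize $\omega'$ to the north pole. Then $K(\omega,\text{north})$ is a degree-$n$ spherical harmonic invariant under rotations about the $z$-axis, hence under $\varphi\mapsto\varphi+\text{const}$; the only such harmonics are multiples of $Y^0_n\propto L_n(\cos\theta)=L_n(\cos\gamma)$. Thus $K=c\,L_n(\cos\gamma)$ with $c$ constant, and diagonal invariance extends this to arbitrary $\omega'$. To evaluate $c$, I set $\omega=\omega'$, so that $\gamma=0$ and $L_n(1)=1$, giving $c=\sum_k|Y^k_n(\omega)|^2$; by invariance this is constant in $\omega$, and integrating over $S^2$ with orthonormality yields $4\pi c=\sum_k\int_{S^2}|Y^k_n|^2\,d\Omega=2n+1$, i.e. $c=(2n+1)/(4\pi)$, as claimed.

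The main obstacle is the rigorous justification of the two invariance facts: that $\pi$ exponentiates to a unitary $\SO(3)$-representation on $H^\C_n$ (so the change-of-basis matrices are unitary and $K$ is diagonal-invariant), and that an azimuthally symmetric degree-$n$ spherical harmonic is proportional to $L_n(\cos\theta)$. The first is a standard consequence of $\g\approx\so(3)$ together with invariance of the surface measure; the second follows from the classification of weight vectors, since in the language of Lemma~\ref{lem:irredrep} the weight-zero subspace of $H^\C_n$ is one-dimensional and spanned by $Y^0_n$. Once these are secured, the remaining evaluation of the constant is routine.
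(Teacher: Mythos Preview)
The paper does not actually prove this lemma: in Appendix~F it is simply quoted as the classical Addition Theorem with a reference to a textbook, and then invoked to give a second proof of Proposition~\ref{prop:qqq}. So there is no ``paper's own proof'' to compare against.

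That said, your argument is correct and is essentially the standard rotational-invariance / reproducing-kernel proof. The steps---unitarity of the $\SO(3)$-action on $H^\C_n$ giving diagonal invariance of $K$, reduction to the zonal case by placing $\omega'$ at the north pole, one-dimensionality of the zero-weight space forcing proportionality to $L_n(\cos\theta)$, and the normalization via $\int_{S^2}\sum_k|Y^k_n|^2\,d\Omega=2n+1$---are all sound and mesh well with the representation-theoretic framework already set up in the paper (in particular Lemma~\ref{lem:irredrep} and Lemma~\ref{prop:harmonicpolynomial}). In short, you have supplied a genuine proof where the paper only gives a citation.
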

  
Prop.~\ref{prop:qqq} is then a corollary to the above result. To see this, we let $(\theta, \varphi) = (\theta',\varphi')$. Then, by the Addition Theorem, we have that for any $(\theta,\varphi)$, 
$$
 \frac{4\pi}{2n + 1} \sum^n_{k = -n} |Y_n^k(\theta, \varphi)|^2 = L_n(1). 
$$ 
Finally, note that $L_n(1) = 1$ for any $n \ge 1$, which then completes the proof of Prop.~\ref{prop:qqq}.   \hfill{\qed}
\end{document}